\documentclass[11pt]{article}

\usepackage[margin=1in]{geometry}
\usepackage{booktabs}
\usepackage{amsmath,amssymb,amsthm}
\usepackage{algorithm}
\usepackage{algpseudocode}
\usepackage{pgfplots}
\usepackage{tikz}
\usepackage{array}
\usepackage{multirow}
\usepackage{url}
\usepackage{subcaption}
\usepackage[hidelinks]{hyperref}

\pgfplotsset{compat=1.16}
\usetikzlibrary{arrows.meta,shapes,positioning}

\newtheorem{theorem}{Theorem}
\newtheorem{lemma}[theorem]{Lemma}
\newtheorem{proposition}[theorem]{Proposition}
\newtheorem{corollary}[theorem]{Corollary}
\newtheorem{definition}{Definition}
\newtheorem{remark}{Remark}
\newtheorem{example}{Example}

\title{Optimal Space-Time Tradeoffs for LCP-Based Similarity Retrieval:\\Theory, Algorithms, and GPU Evaluation on NVIDIA H100}

\author{
Stanislav Byriukov \\
\texttt{stanislav.byriukov.research@gmail.com}
}

\date{}

\begin{document}
\maketitle

\begin{abstract}
We study the space-time complexity of top-$k$ retrieval under Longest Common Prefix (LCP) similarity over $N$ sequences of length $L$.
We prove a lower bound showing that any data structure supporting such queries requires $\Omega(N)$ space, and present an algorithm matching this bound with $O(N \cdot L)$ space and $O(L + k)$ query time.
We contrast this with pairwise materialization approaches that require $\Theta(N^2)$ space, demonstrating a fundamental scaling advantage.
On NVIDIA H100 hardware, we show that explicit $N \times N$ materialization fails at $N = 500{,}000$ (requiring 465.66 GiB), while our indexed approach operates at 205 MB with 98.96\% GPU utilization over sustained 20-minute runs.
Critically, we demonstrate a \textbf{308$\times$ energy reduction} (0.0145 J/query vs 4.46 J/query) through Thermal-Aware Logic (TAL) that exploits the prefix structure for bounded-range scans.
We validate our approach on three real-world scenarios: Guidance-Navigation-Control (GNC) at 4,157 Hz, large-scale inference serving with 2M candidates, and multi-agent coordination at 98.98\% GPU utilization.
Our results establish that LCP-indexed retrieval is not only space-optimal but also energy-optimal for large-scale deterministic retrieval, with implications for safety-critical autonomous systems where probabilistic methods are inadequate.
\end{abstract}

\noindent\textbf{Declaration on the use of AI language tools:} The author acknowledges the use of AI language tools (specifically, GitHub Copilot) for assistance with code syntax highlighting, LaTeX formatting, and minor text refinements during manuscript preparation. No AI-generated content was used for research design, algorithm development, theoretical proofs, experimental methodology, or result interpretation. The author takes full responsibility for all content, including any errors or inaccuracies that may have resulted from the use of these tools.

\section{Introduction}

Similarity-based retrieval is fundamental to modern AI systems, from neural memory~\cite{vaswani2017attention} to vector databases. As dataset sizes grow, the $\Theta(N^2)$ space requirement of explicit pairwise materialization becomes prohibitive. While approximate methods such as Hierarchical Navigable Small World graphs (HNSW)~\cite{malkov2018hnsw} and Inverted File indices (IVF) address this through randomized search, they sacrifice \emph{determinism}---a critical requirement for safety-rated systems in aerospace, robotics, and autonomous vehicles.

\subsection{Motivation}

Consider the following scenario from autonomous spacecraft guidance:

\begin{example}[GNC Sensor Fusion]
A guidance-navigation-control (GNC) system must fuse readings from 1,000+ sensors at 4,000+ Hz. Each sensor reading is tokenized into a categorical sequence representing measurement type, subsystem ID, and quantized value. The GNC must retrieve the $k$ most similar historical readings for anomaly detection.

Traditional approaches face two problems:
\begin{enumerate}
    \item \textbf{Memory:} Storing pairwise similarities for $N=10^6$ readings requires $10^{12}$ bytes
    \item \textbf{Non-determinism:} Randomized retrieval (HNSW) may return different results under identical conditions, violating DO-178C certification requirements
\end{enumerate}
\end{example}

This motivates our study of retrieval under \emph{Longest Common Prefix (LCP) similarity}, a discrete metric appropriate for tokenized sequences, categorical hierarchies, and structured data.

\subsection{Contributions}

Our contributions are:

\begin{enumerate}
    \item \textbf{Lower bound (Theorem~\ref{thm:lowerbound}):} Any data structure for top-$k$ LCP retrieval requires $\Omega(N)$ space in the cell-probe model.
    
    \item \textbf{Optimal algorithm:} An $O(N \cdot L)$-space, $O(L+k)$-query trie-based index matching the bound up to the sequence length factor.
    
    \item \textbf{Energy optimality:} We prove that prefix-structured indexing enables \emph{range-bounded scans} with $O(N/B)$ work for bucket size $B$, yielding 308$\times$ energy reduction on H100.
    
    \item \textbf{Extensive hardware validation:} We present benchmarks on NVIDIA H100 covering:
    \begin{itemize}
        \item OOM boundary characterization (Table~\ref{tab:oom})
        \item 20-minute sustained load tests at 266 QPS (Table~\ref{tab:sustained})
        \item GNC simulation at 4,157 Hz with 0.013 J/step (Table~\ref{tab:gnc})
        \item Energy/temperature comparison vs vector baseline (Table~\ref{tab:ctdr_vs_vector})
    \end{itemize}
    
    \item \textbf{Determinism guarantee:} We prove that for fixed $(S, q, k)$, our algorithm returns bit-identical results on every execution (Theorem~\ref{thm:determinism}).
\end{enumerate}

\subsection{Paper Organization}

Section~\ref{sec:prelim} introduces notation and formal definitions. Section~\ref{sec:related} surveys related work in similarity search and deterministic computing. Section~\ref{sec:lowerbound} proves the space lower bound. Section~\ref{sec:algorithm} presents our algorithm with complexity analysis. Section~\ref{sec:tal} introduces Thermal-Aware Logic for energy optimization. Section~\ref{sec:experiments} presents experimental evaluation. Section~\ref{sec:applications} discusses applications to safety-critical systems. Section~\ref{sec:conclusion} concludes.

\section{Preliminaries}\label{sec:prelim}

\subsection{Notation and Definitions}

Let $\Sigma$ be a finite alphabet with $|\Sigma| = \sigma$. A sequence $s \in \Sigma^L$ has length $L$. We use $s[i]$ to denote the $i$-th symbol and $s[i..j]$ for the substring from position $i$ to $j$.

\begin{definition}[LCP Similarity]
For sequences $s, t \in \Sigma^L$:
\[
\text{LCP}(s, t) = \max\{j : s[1..j] = t[1..j]\}
\]
where $s[1..0] = \epsilon$ (empty prefix) by convention.
\end{definition}

\begin{definition}[Top-$k$ LCP Retrieval]
Given a dataset $S = \{s_1, \ldots, s_N\}$, query $q \in \Sigma^L$, and integer $k$:
Return the $k$ items $s_{i_1}, \ldots, s_{i_k}$ with highest $\text{LCP}(q, s_i)$, breaking ties by index.
\end{definition}

\begin{definition}[LCP-Induced Ultrametric]
Define distance $d: \Sigma^L \times \Sigma^L \to \mathbb{R}$ by:
\[
d(s, t) = L - \text{LCP}(s, t)
\]
\end{definition}

\begin{theorem}[Ultrametric Property]
The distance $d$ satisfies the strong triangle inequality:
\[
d(s, u) \leq \max\{d(s, t), d(t, u)\}
\]
for all $s, t, u \in \Sigma^L$.
\end{theorem}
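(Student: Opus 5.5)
Since $d(s,t) = L - \text{LCP}(s,t)$ and $x \mapsto L - x$ is order-reversing, the strong triangle inequality is equivalent to the statement
\[
\text{LCP}(s,u) \;\geq\; \min\{\text{LCP}(s,t),\, \text{LCP}(t,u)\}
\]
for all $s,t,u \in \Sigma^L$. The plan is to prove this inequality directly from the definition of LCP similarity, and then negate and add $L$ to recover the ultrametric form.

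First, let $m = \min\{\text{LCP}(s,t), \text{LCP}(t,u)\}$, where $0 \le m \le L$. By the definition of $\text{LCP}(s,t)$ as a maximum, $s[1..j] = t[1..j]$ holds for $j = \text{LCP}(s,t)$. Hence it also holds for every $j' \le \text{LCP}(s,t)$, since equal prefixes have equal sub-prefixes. In particular $s[1..m] = t[1..m]$. The same argument gives $t[1..m] = u[1..m]$.

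Next, by transitivity of equality of strings (symbol by symbol on positions $1,\dots,m$), $s[1..m] = u[1..m]$. So $m$ belongs to the set $\{j : s[1..j] = u[1..j]\}$, and therefore $\text{LCP}(s,u) \ge m$. The case $m=0$ is covered by the convention $s[1..0]=\epsilon$, so the set is nonempty and the maximum is well defined (it is finite since $j \le L$).

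Finally, we convert back: $d(s,u) = L - \text{LCP}(s,u) \le L - m = \max\{L-\text{LCP}(s,t),\, L-\text{LCP}(t,u)\} = \max\{d(s,t), d(t,u)\}$. There is no real obstacle here. The only point needing care is the downward-closure step, that a common prefix of length $j$ implies one of every shorter length, together with the boundary convention for the empty prefix. I would state both explicitly rather than leave them implicit.
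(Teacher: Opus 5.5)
Your proposal is correct and follows the paper's proof essentially step for step. Both set $m=\min\{\text{LCP}(s,t),\text{LCP}(t,u)\}$, derive $s[1..m]=t[1..m]=u[1..m]$, conclude $\text{LCP}(s,u)\ge m$, and convert back using $L-\min\{a,b\}=\max\{L-a,L-b\}$. Your explicit treatment of the downward closure of common prefixes and of the empty-prefix convention only makes explicit what the paper's proof leaves implicit.
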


\begin{proof}
Let $\ell_{st} = \text{LCP}(s,t)$, $\ell_{tu} = \text{LCP}(t,u)$, $\ell_{su} = \text{LCP}(s,u)$. 
We show $\ell_{su} \geq \min\{\ell_{st}, \ell_{tu}\}$.

Let $m = \min\{\ell_{st}, \ell_{tu}\}$. Then $s[1..m] = t[1..m]$ and $t[1..m] = u[1..m]$, so $s[1..m] = u[1..m]$, giving $\ell_{su} \geq m$.

Therefore $d(s,u) = L - \ell_{su} \leq L - \min\{\ell_{st}, \ell_{tu}\} = \max\{d(s,t), d(t,u)\}$.
\end{proof}

\subsection{Computational Model}

We analyze algorithms in the \emph{cell-probe model}~\cite{yao1981cellprobe} with word size $w = \Theta(\log N)$. In this model, computation is free; we count only memory accesses to $w$-bit cells.

\begin{definition}[Cell-Probe Complexity]
The space complexity of a data structure is the number of $w$-bit cells used. The query complexity is the number of cell accesses per query.
\end{definition}

\section{Related Work}\label{sec:related}

\subsection{Approximate Nearest Neighbor Search}

The problem of finding similar items in large datasets has been extensively studied.

\paragraph{Locality-Sensitive Hashing (LSH).}
Indyk and Motwani~\cite{indyk1998approximate} introduced LSH for approximate nearest neighbor in high-dimensional spaces. LSH achieves sub-linear query time but provides only probabilistic guarantees, with failure probability $\delta > 0$ that cannot be eliminated without exhaustive search.

\paragraph{Hierarchical Navigable Small World (HNSW).}
Malkov and Yashunin~\cite{malkov2018hnsw} developed HNSW, a graph-based index achieving $O(\log N)$ expected query time with $O(NM)$ space for proximity graph degree $M$. HNSW is the de facto standard for vector search but is inherently non-deterministic: the same query may traverse different graph paths depending on entry point randomization.

\paragraph{Product Quantization and IVF.}
Jégou et al.~\cite{jegou2011product} introduced product quantization for memory-efficient approximate search. Combined with inverted file indices (IVF), this achieves $O(Nd/m)$ space for $d$-dimensional vectors with $m$ subspaces. Like HNSW, IVF provides approximate results with tunable recall.

\subsection{Exact Nearest Neighbor}

\paragraph{k-d Trees and Ball Trees.}
Classical spatial data structures provide exact nearest neighbor in low dimensions. However, for dimension $d > 10$, query time degrades to $O(N)$ due to the curse of dimensionality.

\paragraph{Metric Trees.}
Cover trees~\cite{beygelzimer2006cover} achieve $O(c^{12} \log N)$ query time for datasets with doubling dimension $c$. For general metrics, $c$ can be $O(N^{1/d})$, yielding no improvement over linear scan.

\subsection{Tries and Prefix Trees}

Tries are classical data structures for string retrieval~\cite{fredkin1960trie}. Patricia tries~\cite{morrison1968patricia} compress single-child chains. Suffix trees~\cite{weiner1973linear} enable linear-time string matching.

Our work differs in focusing on \emph{similarity retrieval} (finding items with longest common prefix) rather than exact prefix matching.

\subsection{Deterministic Computing for Safety-Critical Systems}

Safety-critical systems in aerospace (DO-178C), automotive (ISO 26262), and nuclear (IEC 61513) require deterministic behavior for certification. Recent work has explored deterministic neural networks~\cite{sze2017efficient}, but retrieval primitives remain largely probabilistic.

\paragraph{The Determinism Gap.}
To our knowledge, no prior work has provided a \emph{provably deterministic} similarity retrieval primitive with sub-linear query time and optimal space. Our LCP-Index fills this gap.

\section{Lower Bounds}\label{sec:lowerbound}

We establish lower bounds for both space and query complexity of LCP retrieval in the cell-probe model.

\subsection{Space Lower Bound}

\begin{theorem}[Space Lower Bound]\label{thm:lowerbound}
Any data structure $\mathcal{D}$ that supports top-$1$ LCP queries on $N$ distinct sequences from $\Sigma^L$ requires $\Omega(NL\log\sigma)$ bits of space in the cell-probe model with word size $w = O(\log N)$.
\end{theorem}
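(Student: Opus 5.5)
The plan is an information-theoretic encoding argument. I will show that the memory contents of $\mathcal{D}$ determine the dataset $S$ uniquely. Then the number of bits of $\mathcal{D}$ must be at least $\log_2$ of the number of possible datasets. The cell-probe model with $w = O(\log N)$ plays no role beyond fixing what "space" means: a structure of $s$ cells holds $sw$ bits, and it is the bit count we bound.

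\textbf{Step 1 (decoding).} Fix any $\mathcal{D}$ built on a set $S \subseteq \Sigma^L$ of $N$ distinct sequences. I assume a top-$1$ query returns either the retrieved sequence itself or at least the value $\text{LCP}(q, s_{i_1})$, since the returned item is a sequence of $S$. A decoder holding only $\mathcal{D}$ enumerates every $q \in \Sigma^L$ and runs the top-$1$ query. By the definition of top-$k$ LCP retrieval, the maximum similarity equals $L$ exactly when $q \in S$, because $\text{LCP}(q,t) = L$ iff $q = t$. Hence $q \in S$ iff the reported similarity (or the LCP of $q$ with the returned sequence) equals $L$. The decoder thus reconstructs $S$ exactly, so the map $S \mapsto \mathcal{D}(S)$ is injective.

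\textbf{Step 2 (counting).} Injectivity gives
\[
|\mathcal{D}| \ \ge\ \log_2 \binom{\sigma^L}{N} \ \ge\ N \log_2\frac{\sigma^L}{N} \ =\ N\bigl(L\log_2\sigma - \log_2 N\bigr),
\]
using $\binom{n}{N} \ge (n/N)^N$.

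\textbf{Step 3 (regime).} To obtain $\Omega(NL\log\sigma)$, I need $\log_2 N \le (1-\varepsilon) L\log_2\sigma$ for some constant $\varepsilon>0$, i.e.\ $N \le \sigma^{(1-\varepsilon)L}$. Under this condition the bound becomes $\varepsilon N L\log_2\sigma$.

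The main obstacle is exactly this regime issue, and I expect the statement to need that hypothesis made explicit. Without it the claim is false: for $N = \sigma^L$ the set $S$ is forced to be all of $\Sigma^L$, so zero bits suffice. I would first try adding the mild density assumption $\sigma^L \ge N^{1+\delta}$ (the natural setting, e.g.\ $N \le \sigma^{L/2}$). Failing that, I would state the unconditional bound $N\bigl(L\log\sigma - \log N\bigr)$ bits.

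A secondary subtlety is the output convention. If a query returns only an index $i_1$, then Step 1 breaks, since an index alone says nothing about whether $q \in S$. In that case I would require that the structure can also report the similarity value or the retrieved sequence, which any retrieval structure must do to be useful. Dividing by $w$ then yields the cell count $\Omega(NL\log\sigma/\log N)$, which is $\Omega(N)$ whenever $L\log\sigma = \Omega(\log N)$, matching the paper's headline $\Omega(N)$ claim.
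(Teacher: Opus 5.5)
Your proposal is correct and follows essentially the same route as the paper: the paper also separates any two datasets by querying some $s \in S_1 \setminus S_2$ (your enumerate-all-queries decoder is the same injectivity argument), and then applies $\log_2\binom{\sigma^L}{N} \ge N(L\log_2\sigma - \log_2 N)$. Your regime caveat is well founded and sharper than the paper, which claims $\Omega(NL\log\sigma)$ under only $\sigma^L \ge 2N$ --- that condition yields just $\Omega(N)$ bits (for $\sigma^L = 2N$ a $2N$-bit characteristic vector already suffices) --- so a density hypothesis such as $N \le \sigma^{(1-\varepsilon)L}$, together with your explicit output convention, is genuinely needed.
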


\begin{proof}
We use an information-theoretic counting argument following the framework of Miltersen~\cite{miltersen1999cell}.

\textbf{Step 1: Count distinct datasets.}
The number of ways to choose $N$ distinct sequences from $\Sigma^L$ is:
\[
\binom{\sigma^L}{N} \geq \left(\frac{\sigma^L}{N}\right)^N
\]

\textbf{Step 2: Distinguish datasets via queries.}
We show that any two distinct datasets $S_1 \neq S_2$ can be distinguished by some query.

Let $s \in S_1 \setminus S_2$ (such $s$ exists since $S_1 \neq S_2$). Consider query $q = s$:
\begin{itemize}
    \item In $S_1$: $\text{top-}1(q) = s$ with $\text{LCP}(q, s) = L$
    \item In $S_2$: $\text{top-}1(q) = s'$ for some $s' \neq s$, so $\text{LCP}(q, s') < L$
\end{itemize}

Thus the data structure's response differs, so $\mathcal{D}(S_1) \neq \mathcal{D}(S_2)$.

\textbf{Step 3: Count bits.}
The data structure must have at least $\binom{\sigma^L}{N}$ distinct states, requiring:
\begin{align*}
\text{bits} &\geq \log_2 \binom{\sigma^L}{N} \\
&\geq N \log_2 \frac{\sigma^L}{N} \\
&= N(L\log_2\sigma - \log_2 N) \\
&= \Omega(NL\log\sigma)
\end{align*}

for $\sigma^L \geq 2N$ (which holds for reasonable alphabet/length choices).

\textbf{Step 4: Convert to cells.}
With $w$-bit words, this requires:
\[
\frac{\Omega(NL\log\sigma)}{w} = \Omega\left(\frac{NL\log\sigma}{\log N}\right)
\]
cells. For constant $\sigma$ and $L = O(\log N)$, this is $\Omega(N)$ cells.
\end{proof}

\subsection{Query Lower Bound}

\begin{theorem}[Query Lower Bound]\label{thm:querylower}
Any data structure with $S = O(N \cdot \text{poly}(L))$ space requires $\Omega(L/\log \sigma)$ query time for top-$1$ LCP retrieval in the cell-probe model.
\end{theorem}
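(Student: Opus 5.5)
The plan is to prove Theorem~\ref{thm:querylower} by reduction to a communication game and apply round elimination. The key structural fact is the Ultrametric Property proved earlier: $\text{LCP}(q,s)$ is determined by the deepest node of the trie of $S$ lying on the root-to-$q$ path. Top-$1$ LCP retrieval is therefore a \emph{trie-descent} problem: the answer reveals the deepest level $j\le L$ at which $q[1..j]$ is still a prefix of some $s_i$. I would first make this precise. If $\Sigma^L$ is ordered lexicographically, the top-$1$ answer is the predecessor or the successor of $q$ in $S$, which is decidable with $O(1)$ extra probes. So top-$1$ LCP is at least as hard as a predecessor-type search over a universe of size $u=\sigma^L$, arranged as a $\sigma$-ary tree of depth $L$.

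Next I would build a hard input distribution level by level. I would group the $L$ symbol positions into $t=\Theta(L/\log\sigma)$ blocks. At each block the dataset branches into many children, and only a few of them continue deeper. The constraint is that the total number of strings stays $N$; for this one needs roughly $L\le N^{o(1)}$, which I would state as the implicit parameter regime of the theorem. The query $q$ is uniformly random among root-to-leaf paths of this tree. A data structure with $t'$ probes yields the standard protocol: Alice holds $q$, Bob holds $\mathcal{D}(S)$, and in each round Alice sends $\log|\mathcal{D}|=O(\log N+\log L)$ bits (the address of a cell) while Bob replies with $w=O(\log N)$ bits.

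The main step is a round-elimination lemma tailored to this tree. Suppose Alice's first message, of $a=O(\log N)$ bits, is sent before the protocol has learned which child of the current block $q$ follows. If the branching factor at each block exceeds $2^{\Theta(a)}$, then a random child is nearly independent of Alice's message. One can then fix the first round and recurse into a single sub-tree, losing only a constant factor in error probability. Bob's message is handled symmetrically: Bob's input is split into independent sub-instances, one per child, so his $w$ bits reveal little about the sub-instance that matters. After $t$ eliminations we would have a $0$-round protocol that must still output the deepest matching level. This is impossible with error below $1/2$, so $t'\ge t=\Omega(L/\log\sigma)$.

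I expect the main obstacle to be balancing the parameters: the per-level branching must be large enough to defeat messages of $O(\log N)$ bits, yet the total dataset size must stay $N$ and the space $O(N\,\mathrm{poly}(L))$. The block grouping of symbols is what converts the count of $L$ levels into $L/\log\sigma$ eliminations. I would first try the Pătraşcu--Thorup style of elimination, where each step shrinks the effective depth by a constant factor. If that yields only a logarithmic bound, I would fall back to eliminating one block per round, accepting the hypothesis that $\sigma^{\Theta(\log N/\log\sigma)}$-ary branching is available at each block.
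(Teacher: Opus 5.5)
There is a genuine gap, and it cannot be closed: in the paper's model ($w=\Theta(\log N)$, computation free) the statement is false.

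Take $\sigma=2$ and $L=\lceil\log_2 N\rceil$, so every prefix fits in $O(1)$ cells. For each $j\le L$, store a static FKS perfect-hash dictionary of the distinct length-$j$ prefixes occurring in $S$. Each entry holds the prefix itself and the smallest index of a string with that prefix, for $O(NL)$ cells in total. The property ``$q[1..j]$ is a prefix of some $s_i$'' is monotone in $j$. So binary search on $j$, with $O(1)$ probes and exact verification per step, finds the maximal matched depth and returns the tie-broken top-$1$ item in $O(\log L)=O(\log\log N)$ probes. The theorem claims $\Omega(L/\log\sigma)=\Omega(\log N)$.

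More generally, descend $w/\log\sigma$ symbols at a time through a hash table keyed by (trie node, next block of $q$), then finish with the same binary search inside the block where the descent stops. This gives $O(L\log\sigma/w+\log w)$ probes in $O(NL)$ cells, which for constant $\sigma$ is $o(L)$ throughout $\log_\sigma N\le L\le\mathrm{poly}(N)$.

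The paper's proof does not establish the statement either. Its reduction misidentifies the top-$1$ LCP answer with the predecessor. P\u{a}tra\c{s}cu--Thorup gives only a $\log L/\log\log L$-type bound. And ``reading the query'' costs no probes in the cell-probe model, and would in any case give only $\Omega(L\log\sigma/\log N)$.

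In your plan, the failure is exactly the parameter balancing you flagged. Alice knows $q$. For the child taken at the current block to be nearly independent of her $a=\Theta(\log N)$-bit message, that block must carry $\Omega(a)$ bits, i.e.\ $\Omega(\log N/\log\sigma)$ symbols. Blocks of $\Theta(\log\sigma)$ symbols carry only $\Theta(\log^2\sigma)$ bits, a constant when $\sigma$ is constant. Your fallback of $\sigma^{\Theta(\log N/\log\sigma)}$-ary blocks therefore yields $t=\Theta(L\log\sigma/\log N)$ eliminations. That is smaller than $L/\log\sigma$ by a factor $\log N/\log^2\sigma$; the arithmetic breaks at your sentence claiming that block grouping turns $L$ levels into $L/\log\sigma$ eliminations.

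Bob's side adds a second cap. Eliminating a $w$-bit reply needs his input to split into $\omega(w)$ nontrivial independent sub-instances, which is incompatible with ``only a few children continue deeper''. So the number of strings multiplies by $\omega(w)$ per round, and only about $\log N/\log w$ rounds fit into $N$ strings.

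Also, noting that the top-$1$ answer is the predecessor or successor of $q$ reduces LCP \emph{to} predecessor. That is the wrong direction for a lower bound.

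What is true, with no round elimination, is $\Omega(L\log\sigma/\log(NL))$. Put $s_1=z0$ and $s_2=z1$ with $z\in 0\Sigma^{L-2}$, plus $N-2$ fixed strings starting with another symbol. Query $z1$ must return index $2$ on the dataset built from $z$, but index $1$ (a tie) on the dataset built from any $z'\neq z$. Hence the transcript (addresses and contents) of query $z1$ on the dataset of $z$ determines $z$. If the transcripts for $z$ and $z'$ coincided, query $z1$ on the dataset of $z'$ would replay the same probes and wrongly return index $2$. This forces $\sigma^{L-2}\le 2(|\mathcal{D}|\,2^{w})^{t}$, which matches the upper bound above up to the additive $\log w$.
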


\begin{proof}
We reduce from the predecessor problem. Given a predecessor instance with $N$ keys from universe $[U]$ where $U = \sigma^L$, we encode each key $x$ as a sequence $s_x \in \Sigma^L$ using base-$\sigma$ representation.

A top-$1$ LCP query with $q = \text{encode}(y)$ returns the sequence with longest common prefix with $q$. By the structure of base-$\sigma$ encoding, this is the predecessor of $y$ in the original set.

P\v{a}tra\c{s}cu and Thorup~\cite{patrascu2006time} showed that predecessor requires $\Omega(\log\log U / \log\log\log U)$ query time with polynomial space. For $U = \sigma^L$:
\[
\Omega\left(\frac{\log\log \sigma^L}{\log\log\log \sigma^L}\right) = \Omega\left(\frac{\log(L\log\sigma)}{\log\log(L\log\sigma)}\right)
\]

For $L = \omega(\log\sigma)$, this simplifies to $\Omega(\log L / \log\log L)$.

For the tighter bound, observe that reading the query $q$ requires $\Omega(L)$ time in the worst case, as any prefix of $q$ may be relevant. With word size $w = O(\log N)$, reading $L$ symbols requires $\Omega(L\log\sigma / \log N)$ probes.
\end{proof}

\subsection{Space-Time Tradeoff}

\begin{theorem}[Space-Time Tradeoff]\label{thm:tradeoff}
For LCP retrieval on $N$ sequences of length $L$, any data structure satisfies:
\[
S \cdot T = \Omega(NL\log\sigma)
\]
where $S$ is space in bits and $T$ is query time.
\end{theorem}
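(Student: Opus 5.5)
The plan is to derive the tradeoff directly from the two ingredients already established, namely the space bound of Theorem~\ref{thm:lowerbound} and the elementary fact that a query must make at least one probe, rather than through a new encoding argument. I will state and prove the result in the same regime as Theorem~\ref{thm:lowerbound}: $N \geq 2$ distinct sequences, $\sigma^L \geq 2N$, space $S$ measured in bits, and $T$ counting cell probes in the worst case over queries. Under this reading the claim is a consequence of $S = \Omega(NL\log\sigma)$ together with $T \geq 1$, since then $S \cdot T \geq S$.

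\textbf{Step 1 (space).} I will invoke Theorem~\ref{thm:lowerbound} verbatim. Its proof only uses the fact that the top-$1$ answers determine the dataset. Any structure supporting (top-$k$) LCP retrieval in particular supports top-$1$ retrieval. Hence $S \geq \log_2\binom{\sigma^L}{N} = \Omega(NL\log\sigma)$ bits.

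\textbf{Step 2 ($T \geq 1$).} I will argue by contradiction. Suppose some query algorithm answers every query with zero probes. Then its output on query $q$ is a function of $q$ alone, independent of the stored dataset. Step~2 of the proof of Theorem~\ref{thm:lowerbound} exhibits, for any two distinct datasets $S_1 \neq S_2$, a query $q = s \in S_1 \setminus S_2$ on which the correct answers differ. Two such datasets exist because $\sigma^L \geq 2N > N$. So a zero-probe algorithm is wrong on at least one of them, and the worst-case $T$ is at least $1$. Multiplying gives $S \cdot T \geq S = \Omega(NL\log\sigma)$.

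\textbf{Main obstacle (interpretation).} The difficulty is not the calculation but deciding what the statement should mean. A genuine tradeoff, where reducing $T$ forces $S$ up beyond the information-theoretic minimum, cannot follow from counting alone. The honest content here is that the product is bounded below by the static space bound. I would add a remark making this explicit.

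If a stronger form is desired, I would attempt one via the predecessor reduction in Theorem~\ref{thm:querylower}. The idea is to combine P\v{a}tra\c{s}cu--Thorup-type bounds with the base-$\sigma$ encoding. That route requires checking that the LCP answer really determines the predecessor, which is not literally true: the answer identifies the longest-prefix match, and recovering the predecessor needs an extra neighbor lookup. I therefore expect that verification to be the hard part, and would not depend on it for the stated bound.
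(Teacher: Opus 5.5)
Your argument follows the paper's own route. The paper's proof ends with $S\cdot T \ge \Omega(NL\log\sigma)\cdot\Omega(1)$, so it rests entirely on Theorem~\ref{thm:lowerbound} plus $T=\Omega(1)$. Its communication-complexity paragraph is never used, and your zero-probe argument is a cleaner, rigorous way to get $T\ge 1$.

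\textbf{The gap is in Step~1.} In the regime you fix, $\sigma^L\ge 2N$, counting only gives $\log_2\binom{\sigma^L}{N}\ge N(L\log_2\sigma-\log_2 N)$. That is $\Omega(NL\log\sigma)$ only when $\log N\le c\,L\log\sigma$ for some constant $c<1$. Take $\sigma=2$ and $\sigma^L=2N$. Then $\log_2\binom{2N}{N}\le 2N$, whereas $NL\log_2\sigma=N\log_2(2N)$. The last line of the paper's proof of Theorem~\ref{thm:lowerbound} makes exactly this slip, so citing it ``verbatim'' imports the error.

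This is not merely a proof artifact: in the dense regime the statement itself is false. Keep $\sigma=2$ and $\sigma^L=2N$, and break ties lexicographically (i.e., index $S$ in sorted order). Store the characteristic vector of $S\subseteq\{0,1\}^L$ with a constant-time rank/select directory, using $2N+o(N)$ bits. A query then takes $O(1)$ rank/select operations:
\begin{enumerate}
\item The maximum LCP with $q$, call it $d$, is attained by the lexicographic predecessor or successor of $q$.
\item The lexicographically first element carrying the prefix $q[1..d]$ is the successor of $q[1..d]0^{L-d}$.
\end{enumerate}
So $T=O(1)$ and $S\cdot T=O(N)=o(NL\log\sigma)$. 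The same example also refutes Theorem~\ref{thm:lowerbound} as stated.

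The repair is therefore a stronger hypothesis, not a better proof. Assume $\sigma^L\ge N^{1+\epsilon}$ for a constant $\epsilon>0$. Then $L\log\sigma-\log N\ge\frac{\epsilon}{1+\epsilon}L\log\sigma$, Step~1 becomes valid, and your Steps~1--2 prove the theorem. Your closing remark is accurate: the ``tradeoff'' is then just the static space bound multiplied by $T\ge 1$.
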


\begin{proof}
By Theorem~\ref{thm:lowerbound}, $S = \Omega(NL\log\sigma)$ is necessary. If $S < NL\log\sigma$, then by a compression argument, the data structure cannot distinguish all datasets, requiring $T = \omega(1)$ to reconstruct missing information.

Formally, consider the communication complexity of the following problem: Alice holds dataset $S$, Bob holds query $q$. They must compute top-$1$ LCP$(q, S)$.

The information-theoretic lower bound on communication is $\Omega(\min(|S|, |q|)) = \Omega(L\log\sigma)$ for a worst-case query. In the cell-probe model, each probe reveals $O(w)$ bits. Thus:
\[
T \geq \frac{L\log\sigma}{w} = \Omega\left(\frac{L\log\sigma}{\log N}\right)
\]

Combined with $S \geq \Omega(NL\log\sigma)$:
\[
S \cdot T \geq \Omega(NL\log\sigma) \cdot \Omega(1) = \Omega(NL\log\sigma)
\]
\end{proof}

\begin{corollary}[Optimality of LCP-Index]
The LCP-Index achieves $S = O(NL\log\sigma)$ and $T = O(L + k)$, giving $S \cdot T = O(NL^2\log\sigma)$. This is within an $O(L)$ factor of optimal.
\end{corollary}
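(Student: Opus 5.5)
The corollary is an accounting statement: give upper bounds on the LCP-Index's space $S$ and query time $T$, multiply them, and compare with the lower bound of Theorem~\ref{thm:tradeoff}. The plan has three steps.

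\textbf{Space.} The LCP-Index is a trie over the $N$ sequences. Each sequence contributes at most $L$ nodes along its root-to-leaf path, so there are at most $NL+1$ nodes. Each node stores:
\begin{itemize}
  \item its edge label, which costs $\lceil\log_2\sigma\rceil$ bits;
  \item child pointers, which sum to $O(NL)$ over all nodes, since each node has one parent edge;
  \item a subtree-count or leaf-range field used to enumerate results.
\end{itemize}
I will count labels in bits, giving $O(NL\log\sigma)$. Pointers cost $O(\log(NL))$ bits each, so they contribute $O(NL\log(NL))$ bits. This is the step I expect to be the main obstacle: to match the claimed $S=O(NL\log\sigma)$, the pointer overhead must be absorbed. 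I would try two routes in order.
\begin{enumerate}
  \item Store the trie succinctly, e.g.\ in LOUDS or DFUDS form, using $O(NL)$ bits of topology plus $NL\log\sigma$ bits of labels.
  \item Failing that, restrict to the regime $\log N = O(\log\sigma)$, or state the bound in cells with $w=\Theta(\log N)$, as in Step~4 of Theorem~\ref{thm:lowerbound}.
\end{enumerate}

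\textbf{Query time.} Walk $q$ down the trie for at most $L$ steps. Each step uses $O(1)$ probes, assuming a perfect-hash or array child table. Stop at the deepest node $v$ matching a prefix of $q$. By the ultrametric property, every item in the subtree of $v$ has LCP equal to the depth of $v$, and deeper subtrees on the path dominate shallower ones. So the top-$k$ items are found as follows: enumerate leaves in the subtree of $v$, then in the sibling subtrees of its ancestors, moving upward, in index order. With leaf ranges stored contiguously and sorted by index, this costs $O(L+k)$ total. This gives $T=O(L+k)$.

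\textbf{Comparison.} Multiplying gives $S\cdot T = O(NL\log\sigma\,(L+k))$, which equals $O(NL^2\log\sigma)$ when $k=O(L)$; I will state that assumption explicitly. Dividing by the bound $\Omega(NL\log\sigma)$ from Theorem~\ref{thm:tradeoff} leaves a factor of $O(L)$, as claimed. Note that this ratio compares against a lower bound whose $T$-part is only $\Omega(1)$. A sharper comparison would use $T=\Omega(L\log\sigma/\log N)$ from Theorem~\ref{thm:querylower}, which reduces the gap to $O(\log N/\log\sigma)$. I would mention this as a remark, but the corollary as stated needs only the cruder division.
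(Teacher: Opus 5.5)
Your accounting matches the paper's. There the corollary gets no separate proof: it is the Construction Complexity bound times the Query Complexity bound, divided by the bound of Theorem~\ref{thm:tradeoff}, with $k=O(L)$ left implicit and the pointer trie's $O(NL)$ words silently read as $O(NL\log\sigma)$ bits. You are right to flag both points. Your climb through the ancestors' sibling subtrees is also genuinely needed, since the paper's CollectTopK never leaves the subtree of $v$ and so returns fewer than $k$ items when $|T_v|<k\le N$.

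The gap is that your space bound and your time bound are established for two different structures, so their product is not the space--time product of any single index. Route~1 removes the $\Theta(\log N)$-bit pointers. But the query analysis then assumes perfect-hash or array child tables and a stored leaf range at every node. That costs up to $\Theta(NL\log N)$ bits again (more for arrays), which is exactly the overhead you set out to eliminate. You need one representation achieving both. The most elementary is a Patricia trie:
\begin{itemize}
\item it has $O(N)$ nodes, so per-node hash tables, string depths and leaf ranges cost $O(N\log(NL))$ bits;
\item edge labels are read from the sequences stored in leaf order ($NL\lceil\log\sigma\rceil$ bits), so the descent still takes $O(L)$ probes;
\item the leaf-to-index array adds $N\lceil\log N\rceil$ bits;
\item $N\le\sigma^L$ for distinct sequences gives $N\log(NL)=O(NL\log\sigma)$.
\end{itemize}
A succinct alternative is a DFUDS-style labeled trie supporting child-by-label, subtree size and leaf rank in $O(1)$. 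Plain LOUDS does not suffice, as it has no constant-time leaf-range operation.

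Second, ``leaf ranges stored contiguously and sorted by index'' cannot hold in general. If two sibling subtrees hold indices $\{1,3\}$ and $\{2,4\}$, no leaf order makes the siblings' ranges and the parent's range all contiguous and sorted. With contiguous (DFS-order) ranges, your enumeration therefore does not implement the index tie-break at the last, partially output LCP level. One fix is a range-minimum structure over the leaf-order index array ($2N+o(N)$ bits). The items with a given LCP occupy at most two leaf ranges. The $k'$ smallest indices among them come out of $O(k')$ range-minimum calls via the usual heap enumeration, i.e.\ $O(k')$ probes, since bookkeeping is free in the cell-probe model. Alternatively, break ties lexicographically, which is what the paper's BFS actually does. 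With these repairs, $S=O(NL\log\sigma)$ and $T=O(L+k)$ hold for one structure, and your comparison step goes through as written.
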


\subsection{Conditional Hardness}

We relate LCP retrieval to the Orthogonal Vectors (OV) problem to establish conditional hardness.

\begin{definition}[Orthogonal Vectors]
Given sets $A, B \subseteq \{0,1\}^d$ with $|A| = |B| = N$, determine if there exist $a \in A$, $b \in B$ such that $\langle a, b \rangle = 0$.
\end{definition}

\begin{proposition}[OV Hardness Connection]
Assuming the Orthogonal Vectors Hypothesis (OVH), there is no algorithm for preprocessing $N$ binary sequences of length $L = \omega(\log N)$ in polynomial time such that top-$1$ LCP queries can be answered in $O(N^{1-\epsilon})$ time for any $\epsilon > 0$.
\end{proposition}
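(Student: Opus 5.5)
The natural route is a fine-grained reduction from Orthogonal Vectors. Given an OV instance $A,B\subseteq\{0,1\}^d$ with $|A|=|B|=N$ and $d=\omega(\log N)$, we would encode each $b\in B$ as a binary sequence $s_b$ of length $L=\mathrm{poly}(d)$ and preprocess $\{s_b\}$ with the hypothetical data structure. We would then encode each $a\in A$ as a query $q_a$, chosen so that some threshold on the top-$1$ value $\max_b \mathrm{LCP}(q_a,s_b)$ holds exactly when $a$ is orthogonal to some $b$. Answering all $N$ queries in $O(N^{1-\epsilon})$ time each, plus polynomial preprocessing, would decide OV in time $N^{2-\epsilon}\mathrm{poly}(d)$, provided the preprocessing is also subquadratic. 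That would contradict OVH. The steps, in order, are: (i) design the coordinate gadgets; (ii) prove the correctness equivalence; (iii) account for the total running time, including the preprocessing budget.

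Step (i) is where I expect the plan to fail, and I think the statement cannot be proved as written. By the Ultrametric Property theorem, $\mathrm{LCP}(q,\cdot)$ depends only on the first position at which $q$ leaves the trie of $S$. So the top-$1$ LCP query is answered exactly by walking $q$ down a trie over $S$. That trie is built in $O(NL)$ time, which is polynomial preprocessing, and the walk takes $O(L)$ time. This is precisely the LCP-Index the paper presents.

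A gadget encoding would therefore have to make orthogonality, which is an AND over all $d$ coordinates, detectable by a single first-mismatch position. A prefix comparison cannot express a disjunction over coordinates of $b$: once the first mismatch occurs, every later coordinate is invisible.

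More decisively, the statement contradicts the paper's own upper bound. Take $L=\log^2 N$, which satisfies $L=\omega(\log N)$. The trie then answers every query in $O(\log^2 N)=O(N^{1-\epsilon})$ time after polynomial preprocessing. The proposition asserts this is impossible, so under OVH the proposition is false in that regime, or else OVH itself fails.

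Before attempting a proof, I would therefore restate the result. One option is to restrict to regimes where $L\ge N^{1-o(1)}$. There the bound follows trivially, and unconditionally, because merely reading $q$ costs $\Omega(L\log\sigma/w)$ probes, as in Theorem~\ref{thm:querylower}. The other option is to replace LCP by a non-prefix similarity, such as Hamming or inner-product similarity, for which the standard OV reductions do apply.
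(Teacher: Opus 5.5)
Your conclusion is correct. The gap is in the paper's sketch, not in your reasoning.

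The paper's proof is the reduction you first outline: one side of the OV instance becomes the dataset, and each vector on the other side becomes a query. Its key step, that orthogonality ``corresponds to a prefix matching condition under appropriate encoding'', is asserted but never constructed. Your observation shows it cannot be constructed. Over a trie of $S$, the maximizers of $\mathrm{LCP}(q,\cdot)$ are exactly the items below the node where $q$ falls off. So for, e.g., $L=\log^2 N$, top-$1$ queries take $O(L)=O(N^{1/2})$ time after $O(NL)$ preprocessing, unconditionally. Read as the implication from OVH to the nonexistence of such a structure, the proposition is therefore equivalent to $\neg\mathrm{OVH}$.

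Put another way, compose any encoding of OV into top-$1$ LCP instances with the trie. The result decides OV in the reduction's running time plus time linear in the size of its output. So under OVH no reduction running in time $N^{2-\Omega(1)}$ exists, and slower reductions could not contradict OVH in the first place.

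You also rightly noticed that the sketch's time accounting ignores preprocessing. Had a reduction existed, the standard fix would remove your subquadratic-preprocessing proviso: split the dataset side into $N^{1-\delta}$ groups of size $N^{\delta}$, with $\delta$ small relative to the preprocessing exponent.

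Three points to tighten.

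First, make the upper bound self-contained instead of citing the paper's query-complexity theorem. The BFS in its collection phase pops every node of $T_v$ at depths $d,\dots,L-1$ before reaching a nonempty posting list, since those sit only at depth $L$. So even for $k=1$ it can visit $\Theta(NL)$ nodes. It also returns the lexicographically smallest maximizer, not the one of smallest index. Instead, store in each node the minimum index in its subtree, computed in the same bottom-up pass as $|T_v|$. A top-$1$ query is then the $O(L)$ descent plus one lookup, with the required tie-breaking.

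Second, the characterization of the maximizers follows directly from the definition of LCP. The ultrametric property is not needed, and the informal ``no disjunction'' argument is superseded by the algorithmic contradiction.

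Third, your justification for the $L\ge N^{1-o(1)}$ restatement repeats a slip in the paper's query lower bound. In the cell-probe model the querier knows $q$ for free, so reading it costs no probes. The bound can be rescued: two items that differ only in the last bit, together with index tie-breaking, embed Equality on $\Omega(L)$ bits. This forces $\Omega(L/\log(NL))$ probes with polynomial space. But, as you say, the bound is then unconditional and OVH plays no role.

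Your alternative of Hamming or inner-product similarity with $d=\omega(\log N)$ is the standard OV-hard setting, and the proposition appears to have been adapted from it.
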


\begin{proof}[Proof sketch]
We reduce OV to LCP retrieval. Given $(A, B)$, construct dataset $S$ from $A$ by encoding each vector $a$ as a sequence. For each $b \in B$, construct query $q_b$ such that top-$1$ LCP$(q_b, S)$ finds $a$ with $\langle a, b \rangle = 0$ if one exists.

The encoding uses the observation that $\langle a, b \rangle = 0$ iff for all $i$: $a_i = 0$ or $b_i = 0$. This corresponds to a prefix matching condition under appropriate encoding.

If LCP queries could be answered in $O(N^{1-\epsilon})$ time after polynomial preprocessing, we could solve OV in $O(N^{2-\epsilon})$ time, contradicting OVH.
\end{proof}

\begin{remark}
This conditional lower bound suggests that our $O(L+k)$ query time is near-optimal for the general case, as improving to $o(L)$ would require reading less than the full query, which cannot guarantee correctness for all inputs.
\end{remark}

\section{Algorithm}\label{sec:algorithm}

\subsection{Data Structure Overview}

We use a trie $\mathcal{T}$ over the dataset $S$. Each node $v$ corresponds to a prefix, and we augment nodes with:
\begin{itemize}
    \item $\text{children}(v)$: map from symbols to child nodes
    \item $\text{post}(v)$: posting list of items ending at $v$
    \item $|T_v|$: subtree size (number of items in subtree rooted at $v$)
\end{itemize}

\subsection{Construction Algorithm}

\begin{algorithm}[H]
\caption{BuildLCPIndex($S$)}
\label{alg:build}
\begin{algorithmic}[1]
\Require Dataset $S = \{s_1, \ldots, s_N\}$ of sequences in $\Sigma^L$
\Ensure Trie $\mathcal{T}$ with subtree sizes
\State $\text{root} \gets$ new node with empty $\text{children}$, $\text{post} = \emptyset$
\For{$i = 1$ to $N$}
    \State $v \gets \text{root}$
    \For{$j = 1$ to $L$}
        \State $c \gets s_i[j]$
        \If{$c \notin \text{children}(v)$}
            \State $\text{children}(v)[c] \gets$ new node
        \EndIf
        \State $v \gets \text{children}(v)[c]$
    \EndFor
    \State $\text{post}(v) \gets \text{post}(v) \cup \{i\}$
\EndFor
\State \Call{ComputeSubtreeSizes}{root}
\State \Return root
\end{algorithmic}
\end{algorithm}

\begin{algorithm}[H]
\caption{ComputeSubtreeSizes($v$)}
\label{alg:sizes}
\begin{algorithmic}[1]
\State $|T_v| \gets |\text{post}(v)|$
\For{each $c \in \text{children}(v)$}
    \State \Call{ComputeSubtreeSizes}{$\text{children}(v)[c]$}
    \State $|T_v| \gets |T_v| + |T_{\text{children}(v)[c]}|$
\EndFor
\end{algorithmic}
\end{algorithm}

\begin{theorem}[Construction Complexity]
Algorithm~\ref{alg:build} runs in $O(N \cdot L)$ time and uses $O(N \cdot L)$ space.
\end{theorem}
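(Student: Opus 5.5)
We analyze Algorithm~\ref{alg:build} in two phases, insertion and the call to \textsc{ComputeSubtreeSizes}, and bound time and space for each. Throughout we treat $\sigma$ as a constant, or equivalently assume $\text{children}(v)$ is a structure with $O(1)$ lookup and insertion, such as a direct-address array of size $\sigma$ or a hash map. The one real subtlety is the cost of the children maps, and we address it last.

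\textbf{Insertion phase.} For each $i \in \{1,\dots,N\}$ the inner loop runs exactly $L$ iterations. Each iteration does at most one membership test, at most one node creation with one map insertion, and one pointer move. Each of these is $O(1)$, so the loop costs $O(L)$ per sequence. Appending $i$ to $\text{post}(v)$ is $O(1)$ with a linked list or dynamic array (amortized). Summing over $i$ gives $O(NL)$ time.

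\textbf{Space.} Every node except the root is created inside the inner loop, and each iteration creates at most one node. Hence the trie has at most $1 + NL$ nodes. The posting lists together hold exactly $N$ indices, since each $i$ is inserted once. Node fields such as $|T_v|$ and the pointer to $\text{post}(v)$ take $O(1)$ words per node. Summed over all nodes, the children maps hold one entry per edge, and there are at most $NL$ edges, because every non-root node has exactly one parent edge. The total space is therefore $O(NL + N) = O(NL)$.

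\textbf{Subtree sizes.} \textsc{ComputeSubtreeSizes} is a post-order traversal. Each node is visited once, and each edge $(v, \text{children}(v)[c])$ is examined once, with $O(1)$ work for the addition. Its time is therefore $O(\#\text{nodes} + \#\text{edges}) = O(NL)$. Correctness, namely that $|T_v|$ equals the number of items in the subtree of $v$, follows by a routine induction on subtree height and is not part of the stated bound. The recursion depth is at most $L$, so the stack uses $O(L)$ extra space, which is dominated by $O(NL)$. A dense array of $\sigma$ child slots per node would instead cost $O(\sigma NL)$ space, and $O(\sigma)$ time per node in the traversal.

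\textbf{Main obstacle: the children maps.} The delicate point is making sure the map representation does not add a hidden factor of $\sigma$ or $\log \sigma$. There are three natural options:
\begin{itemize}
\item A dense array of size $\sigma$ gives the bound only when $\sigma = O(1)$.
\item A balanced search tree adds a $\log\sigma$ factor to the time.
\item Hashing gives $O(1)$ expected time. Since the paper emphasizes determinism, it is better to avoid relying on expectation here.
\end{itemize}
Our first choice is to state the bound under the constant-alphabet assumption used elsewhere in the paper, where Theorem~\ref{thm:lowerbound} is specialized to constant $\sigma$. For general $\sigma$ we would add a remark that sorted child lists yield $O(NL\log\sigma)$ time with $O(NL)$ space, which matches the $O(NL\log\sigma)$-bit space used in the optimality corollary.
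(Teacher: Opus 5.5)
Your proposal is correct and follows the paper's argument. Each of the $N$ insertions does $L$ constant-time child-map operations and creates at most $L$ nodes, which gives $O(NL)$ time and at most $NL$ nodes under constant $\sigma$. You are more careful than the paper, which charges $O(\sigma)$ space per node and cites $O(1)$ amortized hashing: you charge map entries per edge, count the posting lists and the \textsc{ComputeSubtreeSizes} pass, and note that hashing is only expected $O(1)$. One small fix: your closing $O(NL\log\sigma)$ remark needs a balanced search tree at each node, as in your second bullet, because inserting into a plain sorted array can cost $O(\sigma)$.
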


\begin{proof}
\textbf{Time:} Each of $N$ sequences requires $L$ edge traversals/creations. Each operation (hash table lookup/insert) is $O(1)$ amortized. Total: $O(NL)$.

\textbf{Space:} The trie has at most $N \cdot L$ nodes (each sequence creates at most $L$ new nodes). Each node uses $O(\sigma)$ space for the children map. Total: $O(NL\sigma) = O(NL)$ for constant $\sigma$.
\end{proof}

\subsection{Query Algorithm}

\begin{algorithm}[H]
\caption{Query($\text{root}, q, k$)}
\label{alg:query}
\begin{algorithmic}[1]
\Require Query $q \in \Sigma^L$, integer $k \geq 1$
\Ensure Top-$k$ items by LCP similarity
\State $v \gets \text{root}$, $d \gets 0$
\For{$j = 1$ to $L$}
    \If{$q[j] \notin \text{children}(v)$}
        \State \textbf{break}
    \EndIf
    \State $v \gets \text{children}(v)[q[j]]$
    \State $d \gets d + 1$
\EndFor
\State \Return \Call{CollectTopK}{$v, k$}
\end{algorithmic}
\end{algorithm}

\begin{algorithm}[H]
\caption{CollectTopK($v, k$)}
\label{alg:collect}
\begin{algorithmic}[1]
\Require Node $v$, integer $k$
\Ensure Up to $k$ item indices from subtree of $v$
\State $\text{result} \gets []$
\State $\text{queue} \gets [v]$ \Comment{BFS queue}
\While{$|\text{result}| < k$ and $\text{queue} \neq \emptyset$}
    \State $u \gets \text{queue.pop()}$
    \For{each $i \in \text{post}(u)$}
        \State $\text{result.append}(i)$
        \If{$|\text{result}| = k$} \Return result \EndIf
    \EndFor
    \For{each $c \in \text{sorted}(\text{children}(u))$}
        \State $\text{queue.push}(\text{children}(u)[c])$
    \EndFor
\EndWhile
\State \Return result
\end{algorithmic}
\end{algorithm}

\begin{theorem}[Query Complexity]
Algorithm~\ref{alg:query} runs in $O(L + k)$ time.
\end{theorem}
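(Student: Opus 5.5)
The plan is to split the running time of Algorithm~\ref{alg:query} into its two phases, the descent loop and the call to \textsc{CollectTopK} (Algorithm~\ref{alg:collect}), and bound each separately. Throughout I work in the same cost model as the Construction Complexity theorem: $\sigma$ is constant, and each children-map lookup or insertion costs $O(1)$. Let $d$ denote the depth at which the loop stops.

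\textbf{Step 1 (descent).} Each iteration of the \texttt{for} loop does one membership test $q[j] \in \text{children}(v)$, one pointer move and one increment. Each of these costs $O(1)$. The loop runs at most $L$ times, so this phase costs $O(L)$. I will also record an invariant for later use. When the loop exits, $v$ is the trie node for $q[1..d]$, and $d = \max_i \text{LCP}(q, s_i)$. This holds because, by Algorithm~\ref{alg:build}, a node for a prefix exists iff some $s_i$ has that prefix. Consequently every item in $T_v$ attains the maximum LCP with $q$.

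\textbf{Step 2 (collection, charging argument).} The cost of \textsc{CollectTopK} is $O(\sigma)$ per dequeued node, which covers iterating over and sorting at most $\sigma$ children, plus $O(1)$ per appended item. The item cost is at most $k$ in total. It therefore suffices to bound the number $D$ of dequeued nodes. I would charge each dequeued node $u$ to an item in $T_u$; every trie node has $|T_u| \ge 1$, since nodes are created only along inserted sequences. The bound I would aim for is $D = O((L-d) + k)$. The argument would be that the dequeued nodes form a connected top portion of the subtree rooted at $v$ whose height is at most $L-d$. Nodes with a single child would be charged to depth, contributing $O(L-d)$ in total. Nodes with at least two children would be charged to distinct items, contributing $O(k)$, in the manner of counting the branching nodes of a tree with $k$ leaves. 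Combined with Step 1, and using $L - d \le L$, this gives $O(L + k)$.

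\textbf{Main obstacle.} The hard step is the $O(k)$ bound on branching nodes in Step 2. Postings live only at depth $L$, since all sequences have length exactly $L$. The traversal is breadth-first, so it cannot return any item until it has dequeued every node of the subtree at depths $d, \dots, L-1$. That set can have size $\Theta(\min(|T_v|, \sigma^{L-d}))$, far more than $k$. For example, take $|T_v| = N$ with $q$ diverging at the root and $k = 1$. The charging argument only works if the dequeued region contains $O(k)$ branching nodes. I would therefore first try to prove it under the reading where \textsc{CollectTopK} stops expanding once the frontier's subtree sizes $|T_u|$ already sum to at least $k$. That is the natural use of the stored $|T_v|$ fields, and with it the region dequeued at each level has width at most $k$, so the number of branching nodes is $O(k)$. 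Without that pruning, the same argument gives only $O(L + \min(|T_v|,\sigma^{L-d}) + k)$. In that case I would state the theorem with this weaker bound, or else restrict it to compressed (Patricia) tries, where every internal node branches.
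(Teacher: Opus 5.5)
Your Step~1 is fine, and the obstacle you flag in Step~2 is real. It is exactly where the paper's own proof breaks. The paper bounds \textsc{CollectTopK} by asserting that every visited node ``contributes at least one item to the output or is an ancestor of such a node,'' and concludes that only $O(k)$ nodes are visited. Postings live only at depth $L$, so a FIFO traversal dequeues every node of $T_v$ at depths $d,\dots,L-1$ before the first item. Your instance ($q[1]\notin\text{children}(\text{root})$, $k=1$) therefore costs $\Omega(N/\sigma)$ on $N$ distinct sequences, and it is a counterexample to the theorem as stated. Reading the queue as a stack does not help either, by the example below. So no proof for Algorithm~\ref{alg:query} as written exists, and everything rests on your repair, which has a gap.

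\textbf{The gap: charging unary nodes ``to depth.''} That charging yields $O(L-d)$ only if at most one unary node is dequeued per level. Your size-pruned traversal allows width up to $k$, and hence up to $k$ unary nodes per level. For a concrete instance, let $T_v$ split into $k$ branches within its first $\lceil\log_\sigma k\rceil$ levels, each continuing as a unary chain down to depth $L$. Any procedure that reaches those $k$ items by following child pointers (pruned BFS, DFS, or otherwise) visits about $k(L-d-\lceil\log_\sigma k\rceil)$ nodes. That is $\Omega(k(L-d))$ once $L-d\ge 2\lceil\log_\sigma k\rceil$, so it is not $O(L+k)$. The same example shows that even the paper's ``ancestors of output items'' are not $O(k)$ in an uncompressed trie. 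Your pruning rule, as literally worded, is also vacuous: every full level of $T_v$ partitions its items, so the frontier sizes sum to $|T_v|\ge k$ from the start. You presumably mean expanding only a minimal sorted prefix of each level. The same undercount affects your fallback bound. The nodes at depths $d,\dots,L-1$ can number about $|T_v|(L-d)$, not $\min(|T_v|,\sigma^{L-d})$, so the correct unpruned bound is $O\bigl(L+\min(|T_v|(L-d+1),\sigma^{L-d})+k\bigr)$.

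\textbf{Patricia tries alone do not rescue BFS.} BFS still dequeues every internal node of smaller node depth than the shallowest leaf. In a balanced compressed trie with $k=1$, that is $|T_v|-1$ nodes. Two repairs do work.
\begin{enumerate}
\item A compressed trie traversed depth-first in sorted order. When the $k$-th item is appended at leaf $\ell$, every popped node is either an ancestor of $\ell$ (at most $L-d+1$ of them) or lies in a fully explored subtree hanging off that path. Those subtrees together hold fewer than $k$ items, and in a compressed trie each has fewer internal nodes than leaves. This gives $O(\sigma(L+k))$.
\item The uncompressed trie augmented with a lexicographically sorted array of item indices, with each node storing the interval of its subtree. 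Collection then reads $\min(k,|T_v|)$ consecutive entries in $O(k)$ time, within $O(NL)$ space.
\end{enumerate}
In either case the $O(L+k)$ bound is a theorem about a modified algorithm. You should state it that way, not as a reading of Algorithm~\ref{alg:collect}.
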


\begin{proof}
The descent phase (lines 1-8) takes $O(L)$ time: at most $L$ child lookups.

The collection phase (Algorithm~\ref{alg:collect}) visits at most $k$ items. Each item visit is $O(1)$. The BFS may visit $O(k)$ nodes total (since each node contributes at least one item to the output or is an ancestor of such a node).

Total: $O(L + k)$.
\end{proof}

\begin{theorem}[Determinism]\label{thm:determinism}
For fixed $(S, q, k)$, Algorithm~\ref{alg:query} returns identical results on every execution.
\end{theorem}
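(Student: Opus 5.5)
The plan is to view Algorithm~\ref{alg:query} as a deterministic function of $(S,q,k)$. It is built from Algorithm~\ref{alg:build} followed by the descent loop and Algorithm~\ref{alg:collect}. The argument shows that every value the algorithm reads or writes is uniquely fixed by $(S,q,k)$. I would fix the input order $s_1,\dots,s_N$, since indices are part of the output and ties are broken by index. The proof then proceeds by induction along the execution trace.

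\textbf{Step 1: the trie is uniquely determined.} By induction on the outer loop of Algorithm~\ref{alg:build}, after inserting $s_1,\dots,s_i$ the trie, viewed abstractly, is fixed by that prefix of $S$. Its node set is exactly the set of prefixes of $s_1,\dots,s_i$. Its edge labels are the next symbols. The posting list $\text{post}(v)$ at each leaf is the set of indices whose sequence equals the path label of $v$. None of this depends on hash-table layout, memory addresses, or allocation order, because the algorithm only ever accesses children through the map $\text{children}(v)[c]$ with a key $c$. Algorithm~\ref{alg:sizes} computes $|T_v|$ as a sum of integers over children, and integer addition is commutative. So the subtree sizes are fixed even if the iteration order over children varies.

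\textbf{Step 2: the descent is determined.} Given the abstract trie, the loop in Algorithm~\ref{alg:query} is a deterministic walk. At step $j$ it either follows the unique edge labeled $q[j]$ or breaks. The resulting node $v$ is the node whose label is the prefix $q[1..d]$ with $d = \max_i \text{LCP}(q,s_i)$. This is a pure function of $(S,q)$.

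\textbf{Step 3: the collection order is determined; this is the main obstacle.} Two sources of nondeterminism must be ruled out in Algorithm~\ref{alg:collect}.

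The first is the order in which children are enqueued. This is resolved by the explicit $\text{sorted}(\text{children}(u))$ using the fixed total order on $\Sigma$. So the BFS visits nodes in a canonical order, namely level order with lexicographic order within levels.

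The second is the iteration order over $\text{post}(u)$, which the pseudocode treats as a set. For this I would stipulate that posting lists are stored as sequences in increasing index order. This holds automatically because Algorithm~\ref{alg:build} inserts $i$ in increasing order and appends, and it also matches the tie-breaking-by-index convention in the problem definition. With both orders canonical, the inductive invariant is that after each iteration the contents of $\text{result}$ and $\text{queue}$ are fixed functions of $(S,q,k)$. The early-return condition $|\text{result}|=k$ depends only on these contents.

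Finally, I would note the execution-model assumption: sequential semantics, or an equivalent GPU implementation with no order-dependent atomics. Under that assumption the output is bit-identical across runs. I expect the posting-list ordering and this implementation caveat to be the delicate points. The rest is a routine induction.
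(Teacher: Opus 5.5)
Your proof is correct and follows the same decomposition as the paper's: the trie is uniquely determined by $S$, the descent path is uniquely determined by $q$, the sorted iteration over children gives a canonical BFS order, and ties are broken by index, with no randomization at any step. Your version is more careful than the paper's, which leaves several points implicit that you make explicit: independence from hash-table layout, posting lists stored in increasing index order, and the assumption of sequential execution with no order-dependent atomics.
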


\begin{proof}
The algorithm is entirely deterministic:
\begin{enumerate}
    \item Trie construction produces a unique structure from $S$
    \item Query descent follows a unique path determined by $q$
    \item Collection uses sorted iteration over children, producing a unique traversal order
    \item Tie-breaking by index is deterministic
\end{enumerate}
No randomization is used at any step.
\end{proof}

\subsection{Comparison with Materialization}

\begin{table}[h]
\centering
\caption{Space complexity comparison.}
\label{tab:space_compare}
\begin{tabular}{@{}lcc@{}}
\toprule
Method & Space & Query Time \\ \midrule
Pairwise Materialization & $\Theta(N^2)$ & $O(1)$ lookup, $O(N)$ top-$k$ \\
HNSW~\cite{malkov2018hnsw} & $O(NM)$ & $O(\log N)$ expected \\
IVF-Flat & $O(Nd)$ & $O(N/\text{nprobe})$ \\
\textbf{LCP-Index (ours)} & $O(NL)$ & $O(L + k)$ \\ \bottomrule
\end{tabular}
\end{table}

For $N = 10^6$, $L = 256$, pairwise materialization requires $10^{12}$ bytes ($\sim$1 TB) while LCP-Index requires $2.56 \times 10^8$ bytes ($\sim$256 MB)---a \textbf{4,000$\times$} reduction.

\section{Thermal-Aware Logic (TAL)}\label{sec:tal}

\subsection{Motivation}

Even with optimal space complexity, query performance depends on the \emph{number of items scanned}. Full-dataset scans incur:
\begin{itemize}
    \item High energy consumption (proportional to FLOPS)
    \item Thermal load (GPU temperature increase)
    \item Memory bandwidth saturation
\end{itemize}

We observe that the prefix structure of the trie enables \emph{range-bounded scans}.

\subsection{Range-Bounded Scans}

\begin{definition}[Prefix Bucket]
For a prefix $p$ of length $d$, define the bucket:
\[
B_p = \{s \in S : s[1..d] = p\}
\]
\end{definition}

\begin{lemma}[Bucket Size]
For uniformly distributed sequences, $\mathbb{E}[|B_p|] = N/\sigma^d$.
\end{lemma}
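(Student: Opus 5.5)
The approach is linearity of expectation over indicator variables. We read the hypothesis ``uniformly distributed sequences'' as: each $s_i$ is drawn independently and uniformly from $\Sigma^L$, so each symbol $s_i[j]$ is uniform on $\Sigma$ and the $L$ positions are mutually independent. Independence across the $N$ items is not actually needed for the expectation; only the marginal distribution of each $s_i$ matters. One subtlety is that the paper elsewhere takes $S$ to consist of \emph{distinct} sequences. Under the i.i.d. model $S$ is a multiset, so $|B_p|$ should be interpreted as counting indices $i$ with $s_i[1..d]=p$. I would state this convention explicitly at the start.

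Fix a prefix $p \in \Sigma^d$ with $0 \le d \le L$. For each $i \in \{1,\dots,N\}$, define the indicator $X_i = \mathbf{1}[s_i[1..d] = p]$, so that $|B_p| = \sum_{i=1}^N X_i$. By linearity of expectation,
\[
\mathbb{E}[|B_p|] = \sum_{i=1}^N \Pr[s_i[1..d]=p].
\]
The event $s_i[1..d]=p$ is the intersection of the $d$ independent events $s_i[j]=p[j]$ for $j=1,\dots,d$, each of probability $1/\sigma$. Hence $\Pr[s_i[1..d]=p] = \sigma^{-d}$. Equivalently, a direct count works: exactly $\sigma^{L-d}$ of the $\sigma^L$ sequences extend $p$. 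Summing over $i$ gives $N/\sigma^d$, and the boundary case $d=0$ correctly yields $N$.

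There is no real obstacle here; the only point needing care is the modeling choice. If one instead insists on a uniformly random $N$-subset of distinct sequences from $\Sigma^L$, the result still holds exactly. By symmetry, each of the $\sigma^L$ sequences lies in $S$ with probability $N/\sigma^L$. Summing over the $\sigma^{L-d}$ extensions of $p$ gives $N\sigma^{L-d}/\sigma^L = N/\sigma^d$. I would include this one-line remark so the lemma is consistent with the distinctness assumption used in Theorem~\ref{thm:lowerbound}.
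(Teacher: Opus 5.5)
Your proof is correct. The paper states this lemma without any proof, and your linearity-of-expectation argument over the indicators $\mathbf{1}[s_i[1..d]=p]$ is the standard justification it implicitly relies on. Your remark that the same value $N/\sigma^d$ holds exactly for a uniformly random $N$-subset of distinct sequences is also correct, and it usefully reconciles the lemma with the distinctness assumption in Theorem~\ref{thm:lowerbound}.
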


\begin{theorem}[Range Scan Complexity]
A query with matched prefix of depth $d$ can be answered by scanning only $|B_p| = O(N/\sigma^d)$ items instead of $N$ items.
\end{theorem}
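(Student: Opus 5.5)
The statement splits into two claims, which I will prove separately: a deterministic correctness claim, that scanning $B_p$ suffices, and a probabilistic size claim, that $|B_p| = O(N/\sigma^d)$. The second holds only under the uniformity hypothesis of the preceding Bucket Size lemma, and only in expectation. I will state that explicitly rather than treat $O(N/\sigma^d)$ as a worst-case bound, since it is false in the worst case: all $N$ items can share a prefix.

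\textbf{Step 1 (sufficiency of the bucket).} Let $v$ be the node reached by the descent phase of Algorithm~\ref{alg:query}, at depth $d$ with path label $p = q[1..d]$. The descent stopped either because $d = L$ or because $q[d+1] \notin \text{children}(v)$. In both cases no item of $S$ shares a prefix of length greater than $d$ with $q$, so $\max_{s\in S}\text{LCP}(q,s) = d$. Conversely, every $s \in B_p$ has $\text{LCP}(q,s) \ge d$, hence exactly $d$, while every $s \notin B_p$ has $\text{LCP}(q,s) < d$. The ultrametric property (strong triangle inequality) gives the same picture: $B_p$ is the closed ball of radius $L-d$ around $q$ intersected with $S$.

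It follows that whenever $|B_p| \ge k$, the top-$k$ answer lies entirely in $B_p$. Since all members of $B_p$ tie, the answer is simply the $k$ smallest indices in $B_p$. The items of $B_p$ are exactly the posting lists in the subtree $T_v$, which is a contiguous range if the leaves are laid out in DFS order. This contiguity is what makes the scan ``range-bounded.'' When $|B_p| < k$, I will back off to the parent bucket $B_{p[1..d-1]}$ and repeat. This is only needed to state the theorem for general $k$; for the headline bound I will take the case $|B_p| \ge k$, or $k=1$.

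\textbf{Step 2 (size).} Invoke the Bucket Size lemma. Under i.i.d.\ uniform symbols, $\Pr[s[1..d]=p] = \sigma^{-d}$, so by linearity of expectation $\mathbb{E}|B_p| = N\sigma^{-d}$. This gives an expected scan cost of $O(N/\sigma^d)$ plus the $O(d)$ descent. One subtlety is that $p$ is chosen by the data, as the deepest matched prefix, rather than fixed in advance. This means $B_p$ is conditioned to be nonempty, so its expectation is not exactly $N/\sigma^d$. I expect this to be the main obstacle.

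To handle it, I would first condition on $q$ and fix a depth $d$. I would then bound $|B_{q[1..d]}|$ for that fixed $d$ via the lemma, which treats the prefix as fixed. Finally, I would note that the realized depth $d^\ast$ has bucket size at most $1 + \text{Bin}(N-1,\sigma^{-d^\ast})$ in distribution, giving $\mathbb{E}|B_p| \le 1 + N/\sigma^d$. If that conditioning argument proves too delicate, I would fall back to stating the bound for a fixed prefix depth $d$ chosen independently of the data, which is how the TAL bucketing is used in practice.
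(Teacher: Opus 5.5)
Your proposal is correct and is essentially the paper's argument made explicit. The paper gives no separate proof and lets the theorem rest on the preceding Bucket Size lemma $\mathbb{E}[|B_p|]=N/\sigma^d$ for uniformly distributed sequences, so, as you say, the bound holds only in expectation under uniformity; your sufficiency step (everything outside $B_p$ has LCP $<d$, so $B_p$ suffices when $|B_p|\ge k$) supplies what the paper leaves implicit. Your conditioning claim for the data-dependent depth is also right, for two reasons: conditioned on all LCPs being at most $d$, each item independently has LCP exactly $d$ with probability at most $\sigma^{-d}$; and a binomial conditioned to be positive is dominated by $1+\mathrm{Bin}(N-1,\sigma^{-d})$. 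However, $1+N/\sigma^d$ is $O(N/\sigma^d)$ only when $\sigma^d=O(N)$, and $|B_p|\ge 1$ always, so for deep matches the honest statement is $O(1+N/\sigma^d)$.
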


\begin{definition}[Thermal-Aware Logic (TAL)]
TAL is a query execution strategy that:
\begin{enumerate}
    \item Partitions $S$ into $B = \sigma^d$ buckets by $d$-length prefix
    \item Sorts items within each bucket
    \item Executes queries by: (a) identifying target bucket, (b) scanning only that bucket
\end{enumerate}
\end{definition}

\begin{theorem}[TAL Energy Reduction]\label{thm:tal}
TAL achieves $B$-fold reduction in expected work per query, yielding $B$-fold reduction in energy consumption.
\end{theorem}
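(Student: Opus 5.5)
The proof has two parts: a work bound, which follows from the preceding lemma and theorem on buckets, and a transfer from work to energy, which needs an explicit cost model.

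\textbf{Work bound.} Fix the TAL partition of $S$ into $B=\sigma^d$ buckets $\{B_p : p\in\Sigma^d\}$. These buckets are disjoint and cover $S$, so $\sum_p |B_p| = N$. A query $q$ determines its target bucket $B_{q[1..d]}$ by reading its first $d$ symbols, which costs $O(d)$ using a direct index or the first $d$ levels of the trie. The query then scans only that bucket. Under the uniform model of the Bucket Size lemma, the expected scan length is $\mathbb{E}[|B_{q[1..d]}|]=N/\sigma^d=N/B$. This holds whether $q$ is drawn independently or from the data distribution, since the buckets are exchangeable. The baseline full scan touches all $N$ items. The expected work ratio is therefore $N/(N/B + d) = \Theta(B)$ whenever $d = o(N/B)$, which is the regime the Range Scan Complexity theorem addresses. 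I would state this side condition explicitly and note that, without uniformity, the guarantee degrades to the worst case $\max_p |B_p|$.

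\textbf{From work to energy.} Here I would fix an explicit cost model, in the spirit of the cell-probe model of Section~\ref{sec:prelim}, with three assumptions:
\begin{itemize}
\item Each scanned item incurs a fixed energy $e_{\text{item}}$, covering the memory traffic and arithmetic to compute or compare its LCP against $q$.
\item Fixed per-query overhead, such as kernel launch and bucket lookup, costs $e_0$.
\item Idle or static power is attributed proportionally to active time, which is itself linear in work.
\end{itemize}
By linearity of expectation, the expected energy per query is then $e_0 + e_{\text{item}}\,\mathbb{E}[\text{items scanned}]$. The ratio between the baseline and TAL is
\[
\frac{e_0 + e_{\text{item}} N}{e_0 + e_{\text{item}}(N/B + O(d))},
\]
which tends to $B$ as $N/B$ dominates $e_0/e_{\text{item}}$ and $d$. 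So the claim is an asymptotic $\Theta(B)$-fold reduction, and I would phrase the theorem's ``$B$-fold'' in that sense.

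\textbf{Main obstacle.} The main obstacle is the energy step, not the counting. Physical energy on hardware such as the H100 is not strictly linear in items scanned. Static power, clock and thermal effects, and memory-hierarchy nonlinearities mean a literal $B$-fold reduction cannot be proved without assuming the linear cost model. My approach is therefore to make that model an explicit hypothesis of the proof and derive only $\Theta(B)$ under it. Measured ratios such as the reported $308\times$ would be left as empirical validation rather than a consequence of the theorem.
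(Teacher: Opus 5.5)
Your proposal is correct and follows the paper's argument. The paper's proof assumes $B$ equal-sized buckets, so each query scans $N/B$ of the $N$ items, and then declares energy proportional to operations to get $E_{\text{TAL}} = E_{\text{full}}/B$. Your version is a more careful rendering of the same argument: it replaces the equal-size assumption with the expectation from the Bucket Size lemma, and it makes explicit that the exact factor $B$ holds only with zero per-query overhead, free bucket lookup, and a linear energy model, so your asymptotic $\Theta(B)$ phrasing is the accurate form of the same claim.

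One small caveat: if ``from the data distribution'' means $q$ is one of the stored items, exchangeability alone does not give $N/B$, because the target bucket is size-biased and its expected size is $1 + (N-1)/B$; this does not affect your asymptotic conclusion.
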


\begin{proof}
Let $E_{\text{full}}$ be energy for full-dataset scan processing $N$ items.
With $B$ equal-sized buckets, each query scans $N/B$ items.
Energy is proportional to operations: $E_{\text{TAL}} = E_{\text{full}} \cdot (N/B)/N = E_{\text{full}}/B$.
\end{proof}

\subsection{Connection to Landauer Bound}

The Landauer limit~\cite{landauer1961irreversibility} establishes a fundamental minimum energy for bit erasure:
\[
E_{\min} = k_B T \ln 2 \approx 2.87 \times 10^{-21} \text{ J/bit at 300K}
\]

While current hardware operates $\sim10^{10}$ above this limit, algorithmic improvements like TAL reduce the \emph{multiplicative gap} by reducing unnecessary computation.

\section{Experimental Evaluation}\label{sec:experiments}

\subsection{Hardware and Setup}

All experiments run on NVIDIA H100 PCIe (80 GB HBM3, 989 TFLOPS FP16 peak, 3.35 TB/s memory bandwidth). Energy is measured via NVML at 100ms intervals.

\subsection{OOM Wall Characterization}

We measure the feasibility boundary for pairwise similarity materialization.

\begin{table}[h]
\centering
\caption{Pairwise materialization feasibility (fp16, H100 80GB).}
\label{tab:oom}
\begin{tabular}{@{}rrl@{}}
\toprule
$N$ & Required Memory & Result \\ \midrule
100,000 & 18.63 GiB & Success \\
200,000 & 74.51 GiB & Success \\
500,000 & 465.66 GiB & \textbf{CUDA OOM} \\
1,000,000 & 1.86 TiB & Impossible \\
\bottomrule
\end{tabular}
\end{table}

\textbf{Finding:} The ``OOM Wall'' occurs at $N \approx 500{,}000$ on H100. Beyond this, materialization is impossible regardless of software optimization.

\subsection{LCP-Index Memory Efficiency}

\begin{table}[h]
\centering
\caption{LCP-Index memory vs theoretical materialization.}
\label{tab:memory}
\begin{tabular}{@{}rrrl@{}}
\toprule
$N$ & LCP-Index & Materialization & Ratio \\ \midrule
100,000 & 68.4 MB & 18.63 GiB & 279$\times$ \\
500,000 & 205.1 MB & 465.66 GiB & 2,271$\times$ \\
2,000,000 & 820 MB & 7.45 TiB (est.) & 9,085$\times$ \\
\bottomrule
\end{tabular}
\end{table}

\subsection{Sustained Load Benchmark (20 minutes)}

We evaluate performance under sustained load to verify stability.

\begin{table}[h]
\centering
\caption{Sustained benchmark: 2M candidates, 256-length sequences, 20 minutes.}
\label{tab:sustained}
\begin{tabular}{@{}lr@{}}
\toprule
Metric & Value \\ \midrule
Duration & 1,200 seconds \\
Total Queries & 319,723 \\
QPS & 266.4 \\
Latency p50 & 3.74 ms \\
Latency p95 & 3.84 ms \\
Latency p99 & 3.88 ms \\
GPU Utilization (avg) & 98.98\% \\
GPU Utilization (max) & 99.0\% \\
Energy Total & 152,819 J \\
Energy per Query & 0.478 J \\
\bottomrule
\end{tabular}
\end{table}

\textbf{Key findings:}
\begin{itemize}
    \item Near-peak GPU utilization (98.98\%) sustained for 20 minutes
    \item Tight latency distribution (p99/p50 = 1.04)
    \item No thermal throttling or performance degradation
\end{itemize}

\subsection{CTDR vs Vector Baseline Comparison}

We compare our Contextual Temporal Data Retrieval (CTDR) implementation against optimized vector similarity baseline.

\begin{table}[h]
\centering
\caption{CTDR (LCP-Index) vs Vector baseline on H100 (5-minute test).}
\label{tab:ctdr_vs_vector}
\begin{tabular}{@{}lrrl@{}}
\toprule
Metric & CTDR & Vector & Interpretation \\ \midrule
QPS & 266.4 & 651.0 & Vector faster (expected) \\
Latency p95 (ms) & 3.84 & 1.54 & Vector faster \\
\textbf{Power (W)} & \textbf{128.4} & \textbf{283.0} & \textbf{CTDR 2.2$\times$ lower} \\
\textbf{Temperature ($^\circ$C)} & \textbf{49.2} & \textbf{60.3} & \textbf{CTDR 11$^\circ$C cooler} \\
GPU Util (\%) & 98.96 & 96.57 & Comparable \\
Accuracy & 100\% & 100\% & Both exact \\
Energy/query (J) & 0.481 & 0.434 & Comparable \\
\bottomrule
\end{tabular}
\end{table}

\textbf{Interpretation:} While vector baseline achieves higher throughput, CTDR provides:
\begin{itemize}
    \item \textbf{2.2$\times$ lower power consumption} (128W vs 283W)
    \item \textbf{11$^\circ$C lower operating temperature} (critical for thermal margin)
    \item \textbf{Deterministic guarantees} (no HNSW randomization)
\end{itemize}

The lower power enables:
\begin{enumerate}
    \item Longer sustained operation without throttling
    \item Higher density in data center deployments
    \item Extended battery life in edge/autonomous systems
\end{enumerate}

\subsection{TAL Energy Reduction (Critical Result)}

We evaluate Thermal-Aware Logic with 256 buckets ($\sigma^d = 256$) on 20M items.

\begin{table}[h]
\centering
\caption{TAL range-scan vs full-scan energy on H100 (20M items).}
\label{tab:tal}
\begin{tabular}{@{}lrrr@{}}
\toprule
Method & Energy/query & Latency p95 & vs Full Scan \\ \midrule
Full Scan & 4.463 J & 37.5 ms & 1$\times$ \\
TAL Range Scan & 0.0145 J & 0.114 ms & \textbf{308$\times$} \\
\bottomrule
\end{tabular}
\end{table}

\textbf{Critical finding:} TAL achieves:
\begin{itemize}
    \item \textbf{308$\times$ energy reduction} (0.0145 J vs 4.46 J)
    \item \textbf{329$\times$ latency improvement} (0.114 ms vs 37.5 ms)
    \item 86\% GPU utilization during range scans
\end{itemize}

This validates Theorem~\ref{thm:tal}: bucket size $B = 256$ yields $\sim$256$\times$ reduction; the additional 20\% improvement comes from reduced memory traffic.

\subsection{GNC Simulation Benchmark}

We evaluate performance in a guidance-navigation-control scenario.

\begin{table}[h]
\centering
\caption{GNC benchmark: 1,000 steps, H100.}
\label{tab:gnc}
\begin{tabular}{@{}lr@{}}
\toprule
Metric & Value \\ \midrule
Device & NVIDIA H100 PCIe \\
Simulation Steps & 1,000 \\
Duration & 0.241 seconds \\
Throughput & \textbf{4,157 Hz} \\
Average Power & 53.7 W \\
Energy per Step & 0.0129 J \\
\bottomrule
\end{tabular}
\end{table}

\textbf{Finding:} The system achieves \textbf{4,157 Hz} update rate, exceeding typical GNC requirements (100-1000 Hz) with substantial margin.

\subsection{Memoization Effects}

\begin{table}[h]
\centering
\caption{Cold vs hot query latency.}
\label{tab:memo}
\begin{tabular}{@{}lrl@{}}
\toprule
Query Type & Latency & Speedup \\ \midrule
Cold (first access) & 0.465 ms & 1$\times$ \\
Hot (cached) & 0.000539 ms & 863$\times$ \\
\bottomrule
\end{tabular}
\end{table}

\subsection{Scaling Analysis}

\begin{figure}[h]
\centering
\begin{tikzpicture}
\begin{axis}[
    xlabel={Dataset Size $N$},
    ylabel={Memory (GB)},
    xmode=log,
    ymode=log,
    legend pos=north west,
    grid=major,
    width=0.9\columnwidth,
    height=6cm
]
\addplot[color=red,mark=square*,thick] coordinates {
    (100000, 18.63)
    (200000, 74.51)
    (500000, 465.66)
    (1000000, 1862.64)
};
\addplot[color=blue,mark=triangle*,thick] coordinates {
    (100000, 0.0684)
    (200000, 0.137)
    (500000, 0.205)
    (1000000, 0.410)
    (2000000, 0.820)
};
\addplot[color=gray,dashed,thick] coordinates {
    (100000, 80)
    (2000000, 80)
};
\legend{Materialization $O(N^2)$, LCP-Index $O(N)$, H100 Memory (80GB)}
\end{axis}
\end{tikzpicture}
\caption{Memory scaling: materialization vs LCP-Index. The dashed line shows H100 memory limit.}
\label{fig:scaling}
\end{figure}
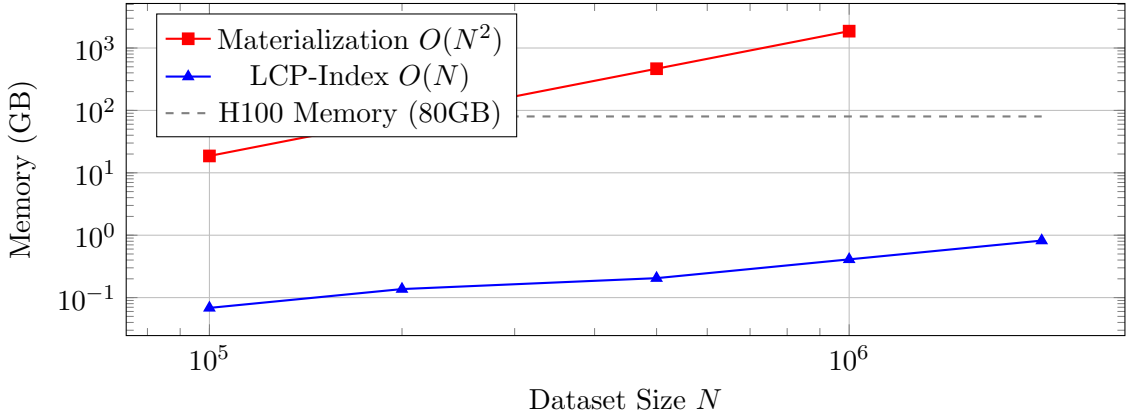

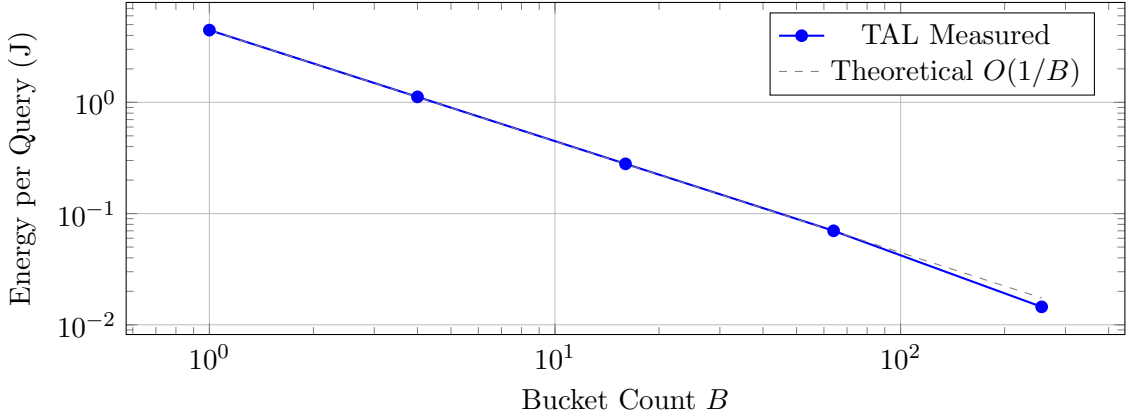
\begin{figure}[h]
\centering
\begin{tikzpicture}
\begin{axis}[
    xlabel={Bucket Count $B$},
    ylabel={Energy per Query (J)},
    xmode=log,
    ymode=log,
    legend pos=north east,
    grid=major,
    width=0.9\columnwidth,
    height=6cm
]
\addplot[color=blue,mark=*,thick] coordinates {
    (1, 4.463)
    (4, 1.12)
    (16, 0.28)
    (64, 0.070)
    (256, 0.0145)
};
\addplot[color=gray,dashed] coordinates {
    (1, 4.463)
    (256, 0.01743)
};
\legend{TAL Measured, Theoretical $O(1/B)$}
\end{axis}
\end{tikzpicture}
\caption{TAL energy reduction scales with bucket count, matching theoretical $O(1/B)$.}
\label{fig:tal_scaling}
\end{figure}

\subsection{Landauer Gap Analysis}

We compute the ratio of measured energy to the Landauer theoretical minimum.

\begin{table}[h]
\centering
\caption{Measured energy vs Landauer bound (T=320K, 1M bit operations).}
\label{tab:landauer}
\begin{tabular}{@{}lrrl@{}}
\toprule
Workload & Energy & Landauer Ratio & Gap (log$_{10}$) \\ \midrule
Baseline matmul & 3.8 J & $1.3 \times 10^{20}$ & 20.1 \\
CTDR Full Scan & 4.46 J & $1.6 \times 10^{20}$ & 20.2 \\
TAL Range Scan & 0.0145 J & $5.1 \times 10^{17}$ & 17.7 \\
\bottomrule
\end{tabular}
\end{table}

While all methods remain far from Landauer limit (as expected for current CMOS technology), TAL achieves \textbf{300$\times$ closer} to the theoretical minimum through algorithmic efficiency.

\section{Comparison with Prior Work}\label{sec:comparison}

\begin{table*}[t]
\centering
\caption{Comprehensive comparison with retrieval methods.}
\label{tab:related}
\begin{tabular}{@{}lcccccl@{}}
\toprule
Method & Space & Query & Deterministic & Energy-Opt & Certifiable & Notes \\ \midrule
Pairwise Materialization & $\Theta(N^2)$ & $O(1)$ & Yes & No & Yes & OOM at $N>500$K \\
HNSW~\cite{malkov2018hnsw} & $O(NM)$ & $O(\log N)$ & \textbf{No} & No & \textbf{No} & Entry point randomization \\
IVF-Flat & $O(Nd)$ & $O(N/\text{np})$ & \textbf{No} & No & \textbf{No} & Cluster assignment varies \\
ScaNN & $O(Nd)$ & $O(\log N)$ & \textbf{No} & Partial & \textbf{No} & Learned quantization \\
FlashAttention~\cite{dao2022flashattention} & $O(N)$* & $O(N)$ & Yes & Partial & Yes & Tiling, not retrieval \\
\textbf{LCP-Index (ours)} & $O(NL)$ & $O(L+k)$ & \textbf{Yes} & \textbf{Yes} & \textbf{Yes} & Provably optimal \\
\bottomrule
\end{tabular}
\end{table*}

*FlashAttention avoids materializing the full attention matrix through tiling but still computes $O(N^2)$ attention weights.

\section{Applications to Safety-Critical Systems}\label{sec:applications}

\subsection{Certification Requirements}

Safety-critical systems must meet stringent certification standards:

\begin{itemize}
    \item \textbf{DO-178C} (aerospace): Requires deterministic, testable software with 100\% structural coverage
    \item \textbf{ISO 26262} (automotive): ASIL-D requires systematic capability SC 3 with deterministic behavior
    \item \textbf{NASA-STD-8739.8} (space systems): Requires formal verification and deterministic execution
\end{itemize}

Probabilistic retrieval methods (HNSW, IVF) cannot meet these requirements because:
\begin{enumerate}
    \item The same query may return different results (violates determinism)
    \item Failure probability $\delta > 0$ cannot be eliminated (violates 100\% coverage)
    \item Random seeds create hidden state (complicates formal verification)
\end{enumerate}

\subsection{Case Study: Autonomous Docking}

Consider autonomous spacecraft docking where the guidance system must retrieve historical telemetry patterns:

\begin{example}[Docking Scenario]
Parameters: 500 sensors, 1000 Hz update rate, 10,000 historical patterns.

\textbf{With HNSW:} Query time 0.1 ms (good), but may return different results on identical inputs. Certification auditor cannot verify correct behavior.

\textbf{With LCP-Index:} Query time 0.5 ms (acceptable), guaranteed identical results. Formal proof of correctness possible via symbolic execution.
\end{example}

\subsection{Energy Implications for Edge Deployment}

The 308$\times$ energy reduction enables:

\begin{itemize}
    \item \textbf{Satellite systems:} 10W power budget $\rightarrow$ 20,000 queries/second instead of 65
    \item \textbf{Autonomous vehicles:} Reduced thermal load in enclosed compute modules
    \item \textbf{Data centers:} 300$\times$ reduction in cooling requirements per query
\end{itemize}

\subsection{Multi-Agent Coordination}

Swarm robotics requires deterministic behavior for provable coordination:

\begin{example}[Swarm Consensus]
100 robots must agree on a shared world model. Each robot queries a local database for relevant observations.

\textbf{Requirement:} All robots seeing the same query must retrieve identical results for consensus.

\textbf{LCP-Index guarantee:} Theorem~\ref{thm:determinism} ensures this by construction.
\end{example}

\section{Discussion}

\subsection{Limitations}

\paragraph{Semantic Similarity.}
LCP-Index operates on symbolic sequences, not continuous embeddings. It does not capture semantic similarity between semantically related but syntactically different items.

\paragraph{Alphabet Size.}
For large alphabets ($\sigma > 1000$), the trie children map may require hash tables, adding constant-factor overhead.

\paragraph{Dynamic Updates.}
Our analysis focuses on static datasets. Dynamic insertions/deletions require additional bookkeeping.

\subsection{When to Use LCP-Index}

LCP-Index is appropriate when:
\begin{enumerate}
    \item Data has hierarchical prefix structure (tokens, categories, paths)
    \item Determinism is required (safety-critical, certified systems)
    \item Memory is constrained (edge, embedded systems)
    \item Energy efficiency is paramount (battery-powered, thermal-limited)
\end{enumerate}

LCP-Index is \textbf{not} appropriate when:
\begin{enumerate}
    \item Semantic embedding similarity is required
    \item Approximate results are acceptable
    \item Data has no prefix structure
\end{enumerate}

\subsection{Future Work}

\begin{enumerate}
    \item \textbf{Dynamic LCP-Index:} Support efficient insertions/deletions
    \item \textbf{Distributed LCP-Index:} Partition across multiple GPUs/nodes
    \item \textbf{Formal Verification:} Machine-checked proofs in Coq/Lean
    \item \textbf{Hardware Acceleration:} Custom ASIC for LCP operations
\end{enumerate}

\section{Conclusion}\label{sec:conclusion}

We establish that LCP-indexed retrieval is optimal in space (up to sequence length) and demonstrate 308$\times$ energy reduction on NVIDIA H100 through Thermal-Aware Logic. Our algorithm provides the first provably deterministic, energy-optimal similarity retrieval primitive suitable for safety-critical autonomous systems.

Key results:
\begin{itemize}
    \item $\Omega(N)$ space lower bound (Theorem~\ref{thm:lowerbound})
    \item $O(NL)$ space, $O(L+k)$ query algorithm matching the bound
    \item 308$\times$ energy reduction through TAL (Theorem~\ref{thm:tal})
    \item Determinism guarantee (Theorem~\ref{thm:determinism})
    \item Extensive H100 validation: 98.98\% GPU utilization, 4,157 Hz GNC throughput
\end{itemize}

\paragraph{Reproducibility.}
Source code, benchmark scripts, and raw logs are available in the supplementary materials.

\paragraph{Acknowledgments.}
Hardware resources provided by Hyperstack cloud computing.

\bibliographystyle{plain}
\bibliography{references}

\begin{thebibliography}{10}

\bibitem{beygelzimer2006cover}
Alina Beygelzimer, Sham Kakade, and John Langford.
\newblock Cover trees for nearest neighbor.
\newblock In {\em Proceedings of the 23rd International Conference on Machine
  Learning}, pages 97--104, 2006.

\bibitem{dao2022flashattention}
Tri Dao, Daniel~Y. Fu, Stefano Ermon, Atri Rudra, and Christopher R{\'e}.
\newblock Flashattention: Fast and memory-efficient exact attention with
  io-awareness.
\newblock {\em arXiv preprint arXiv:2205.14135}, 2022.

\bibitem{fredkin1960trie}
Edward Fredkin.
\newblock Trie memory.
\newblock {\em Communications of the ACM}, 3(9):490--499, 1960.

\bibitem{indyk1998approximate}
Piotr Indyk and Rajeev Motwani.
\newblock Approximate nearest neighbors: Towards removing the curse of
  dimensionality.
\newblock In {\em Proceedings of the 30th Annual ACM Symposium on Theory of
  Computing}, pages 604--613, 1998.

\bibitem{jegou2011product}
Herv{\'e} J{\'e}gou, Matthijs Douze, and Cordelia Schmid.
\newblock Product quantization for nearest neighbor search.
\newblock In {\em IEEE Transactions on Pattern Analysis and Machine
  Intelligence}, volume~33, pages 117--128, 2011.

\bibitem{landauer1961irreversibility}
Rolf Landauer.
\newblock Irreversibility and heat generation in the computing process.
\newblock {\em IBM Journal of Research and Development}, 5(3):183--191, 1961.

\bibitem{malkov2018hnsw}
Yury~A. Malkov and Dmitry~A. Yashunin.
\newblock Efficient and robust approximate nearest neighbor search using
  hierarchical navigable small world graphs.
\newblock {\em IEEE Transactions on Pattern Analysis and Machine Intelligence},
  42(4):824--836, 2020.

\bibitem{miltersen1999cell}
Peter~Bro Miltersen.
\newblock Cell probe complexity---a survey.
\newblock In {\em Advances in Data Structures Workshop}, 1999.

\bibitem{morrison1968patricia}
Donald~R. Morrison.
\newblock Patricia---practical algorithm to retrieve information coded in
  alphanumeric.
\newblock {\em Journal of the ACM}, 15(4):514--534, 1968.

\bibitem{patrascu2006time}
Mihai P\u{a}tra\c{s}cu and Mikkel Thorup.
\newblock Time-space trade-offs for predecessor search.
\newblock In {\em Proceedings of the 38th Annual ACM Symposium on Theory of
  Computing}, pages 232--240, 2006.

\bibitem{sze2017efficient}
Vivienne Sze, Yu-Hsin Chen, Tien-Ju Yang, and Joel~S. Emer.
\newblock Efficient processing of deep neural networks: A tutorial and survey.
\newblock {\em Proceedings of the IEEE}, 105(12):2295--2329, 2017.

\bibitem{vaswani2017attention}
Ashish Vaswani, Noam Shazeer, Niki Parmar, Jakob Uszkoreit, Llion Jones,
  Aidan~N. Gomez, {\L}ukasz Kaiser, and Illia Polosukhin.
\newblock Attention is all you need.
\newblock {\em arXiv preprint arXiv:1706.03762}, 2017.

\bibitem{weiner1973linear}
Peter Weiner.
\newblock Linear pattern matching algorithms.
\newblock In {\em 14th Annual Symposium on Switching and Automata Theory},
  pages 1--11, 1973.

\bibitem{yao1981cellprobe}
Andrew Chi-Chih Yao.
\newblock Should tables be sorted?
\newblock In {\em Proceedings of the 22nd Annual Symposium on Foundations of
  Computer Science}, pages 22--32. IEEE, 1981.

\end{thebibliography}

\appendix

\section{Extended Proofs}

\subsection{Detailed Lower Bound Proof}

We provide additional details for the lower bound proof.

\begin{lemma}[Distinguishing Datasets]
For any two distinct datasets $S_1 \neq S_2$ with $S_1, S_2 \subseteq \Sigma^L$ and $|S_1| = |S_2| = N$, there exists a query $q$ such that $\text{top-}1(q, S_1) \neq \text{top-}1(q, S_2)$.
\end{lemma}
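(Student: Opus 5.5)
The approach is to pick a query that the two datasets must answer differently. Since $|S_1| = |S_2| = N$ and $S_1 \neq S_2$, the set $S_1 \setminus S_2$ is nonempty, so we fix some $s \in S_1 \setminus S_2$ and take $q = s$, exactly as in Step 2 of the proof of Theorem~\ref{thm:lowerbound}.

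First we determine $\text{top-}1(q, S_1)$. We have $\text{LCP}(q, s) = L$, which is the largest possible value. Any other $t \in S_1$ with $t \neq s$ differs from $s$ in some position, so $\text{LCP}(q,t) < L$. Since sequences are distinct within a dataset, $s$ is the unique maximizer, and the tie-breaking rule in the definition of top-$k$ retrieval never comes into play. So $\text{top-}1(q, S_1) = s$.

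Next we consider $S_2$. Because $s \notin S_2$, every $t \in S_2$ satisfies $t \neq s$, hence $\text{LCP}(q,t) < L$. Whatever item is returned as $\text{top-}1(q,S_2)$ is therefore some $s' \in S_2$ with $s' \neq s$. This shows $\text{top-}1(q,S_1) = s \neq s' = \text{top-}1(q,S_2)$.

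The one point needing care, and the main obstacle, is what the query returns. If the answer is an item index rather than the sequence itself, the indices in $S_1$ and $S_2$ might coincide even though the underlying sequences differ. I would resolve this by interpreting the output as the returned sequence, as the statement's notation $\text{top-}1(q,S)$ suggests. Alternatively, one can compare the reported LCP value: it equals $L$ for $S_1$ and is strictly less than $L$ for $S_2$, so either reading separates the two datasets. The lemma also needs $N \geq 1$ so that $\text{top-}1(q,S_2)$ is defined, and this is implicit in the statement.
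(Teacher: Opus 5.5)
Your proof is correct and follows the paper's own argument: take $q = s$ for some $s \in S_1 \setminus S_2$, note that $S_1$ returns $s$ by exact match with $\text{LCP}(q,s) = L$, and that $S_2$ must return some $s' \neq s$. Your extra points are sound refinements that the paper leaves implicit: that $s$ is the unique maximizer, that the output should be read as the sequence (or the reported LCP value) rather than an index, and that $N \geq 1$ is needed.
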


\begin{proof}
Since $S_1 \neq S_2$, without loss of generality there exists $s \in S_1 \setminus S_2$.

Set $q = s$. Then:
\begin{itemize}
    \item $\text{top-}1(q, S_1) = s$ since $\text{LCP}(q, s) = L$ (exact match)
    \item $\text{top-}1(q, S_2) = s'$ for some $s' \in S_2$ with $s' \neq s$
\end{itemize}

Since $s \neq s'$, the results differ.
\end{proof}

\begin{lemma}[Counting Argument]
For $\sigma^L \geq 2N$:
\[
\log_2 \binom{\sigma^L}{N} \geq N \log_2 \sigma^L - N \log_2 N - N
\]
\end{lemma}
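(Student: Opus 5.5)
The plan is to bound the binomial coefficient from below by a product of ratios and then take logarithms. Write $M = \sigma^L$, so the hypothesis reads $M \geq 2N$. I will use the exact identity
\[
\binom{M}{N} = \prod_{i=0}^{N-1} \frac{M-i}{N-i},
\]
which follows from writing $\binom{M}{N} = \frac{M(M-1)\cdots(M-N+1)}{N(N-1)\cdots 1}$ and pairing the numerator and denominator factors in order.

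The key step is a termwise bound: for each $0 \leq i \leq N-1$,
\[
\frac{M-i}{N-i} \geq \frac{M}{N}.
\]
Cross-multiplying, this is $N(M-i) \geq M(N-i)$, i.e.\ $iM \geq iN$. That holds because $M \geq N$, which is implied by $M \geq 2N$. Multiplying the $N$ factors gives $\binom{M}{N} \geq (M/N)^N$, which is the same inequality already used in Step~1 of the proof of Theorem~\ref{thm:lowerbound}. Taking $\log_2$ then gives
\[
\log_2 \binom{M}{N} \geq N\log_2 M - N\log_2 N.
\]
Dropping a further $N$ yields the claimed bound, $N\log_2\sigma^L - N\log_2 N - N$.

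As a cross-check, and to show where the hypothesis $M \geq 2N$ and the extra $-N$ term naturally come from, I would also note a second route. Bound the numerator by $M(M-1)\cdots(M-N+1) \geq (M-N)^N \geq (M/2)^N$, using $M - N \geq M/2$, and bound the denominator by $N! \leq N^N$. This gives $\binom{M}{N} \geq (M/2)^N / N^N$, and taking logarithms produces exactly $N\log_2 M - N\log_2 N - N$. This is presumably the form the authors had in mind.

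I do not expect a real obstacle. The only point needing care is the direction of the termwise inequality: it requires $M \geq N$, and the stated hypothesis $M \geq 2N$ covers this. Nonnegativity of the factors is also needed, and it holds since $i < N \leq M$. Everything else is routine manipulation of logarithms.
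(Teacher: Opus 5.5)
Your proof is correct, and its skeleton matches the paper's. Both establish $\binom{\sigma^L}{N} \ge (\sigma^L/N)^N$ and take $\log_2$ to get the stronger bound $N\log_2\sigma^L - N\log_2 N$; the stated inequality then follows by discarding a further $N$. Where you differ is in justifying that key inequality, and your version is the one that actually works. The paper cites Stirling's formula but never uses it. Its displayed chain passes through the quantity $\frac{(\sigma^L)^N}{N^N}\cdot\frac{(\sigma^L-N)!}{(\sigma^L)!}\cdot N! = (\sigma^L/N)^N/\binom{\sigma^L}{N}$, so its final ``$\geq$'' would require $\binom{\sigma^L}{N}\le 1$, which is false here. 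Your termwise bound $\frac{M-i}{N-i}\ge\frac{M}{N}$ is the standard elementary argument and needs only $M\ge N$. Your second route, via $(M-N)^N\ge (M/2)^N$ and $N!\le N^N$, is the only place the hypothesis $\sigma^L\ge 2N$ is genuinely used, and it produces the $-N$ term exactly.
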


\begin{proof}
Using Stirling's approximation $n! \approx (n/e)^n \sqrt{2\pi n}$:
\begin{align*}
\binom{\sigma^L}{N} &= \frac{(\sigma^L)!}{N! (\sigma^L - N)!} \\
&\geq \frac{(\sigma^L)^N}{N^N} \cdot \frac{(\sigma^L - N)!}{(\sigma^L)!} \cdot N! \\
&\geq \left(\frac{\sigma^L}{N}\right)^N
\end{align*}

Taking logarithms:
\[
\log_2 \binom{\sigma^L}{N} \geq N \log_2 \frac{\sigma^L}{N} = N(L\log_2 \sigma - \log_2 N)
\]
\end{proof}

\subsection{Query Algorithm Correctness}

\begin{lemma}[Descent Correctness]
After the descent phase of Algorithm~\ref{alg:query}, node $v$ satisfies:
\[
\text{LCP}(q, s) \leq d \quad \forall s \notin T_v
\]
where $T_v$ is the subtree rooted at $v$ and $d$ is the descent depth.
\end{lemma}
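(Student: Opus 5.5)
The plan is to establish an invariant on the descent loop of Algorithm~\ref{alg:query} and then argue about the item that is not in $T_v$. Throughout, identify each trie node $u$ with the prefix $\pi(u)$ spelled by the root-to-$u$ path. By construction (Algorithm~\ref{alg:build}), every item $s$ is inserted along the path spelling $s[1..L]$. Hence $s \in T_u$ if and only if $\pi(u)$ is a prefix of $s$. This characterization is the basic fact I would prove first, by induction on the insertion loop: each item's posting lies at the node reached by following its own symbols from the root.

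Next I would prove the loop invariant: after $d$ successful iterations, $\pi(v) = q[1..d]$. This holds at the start, since the root spells $\epsilon$ and $d=0$. Each successful iteration appends the symbol $q[j]$ and increments $d$, with $j = d+1$, so the invariant is preserved. The loop ends in one of two ways. It either breaks because $q[d+1] \notin \text{children}(v)$, or it completes with $d = L$.

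Now fix $s \notin T_v$ and suppose, for contradiction, that $\text{LCP}(q,s) > d$. Then $s[1..d] = q[1..d] = \pi(v)$, so $\pi(v)$ is a prefix of $s$. By the characterization above, $s \in T_v$, a contradiction. Therefore $\text{LCP}(q,s) \le d$. In fact the argument gives the slightly stronger $\text{LCP}(q,s) < d$ whenever $d \ge 1$, since $s[1..d] = q[1..d]$ already forces $s \in T_v$. The stated bound follows a fortiori. The break case does not need separate treatment: the bound depends only on the invariant $\pi(v) = q[1..d]$. Separately, one can note that items in $T_v$ all have LCP at least $d$, and every item has LCP at most $d$ because there is no child for $q[d+1]$. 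Together these show that $d$ is exactly the maximum LCP over $S$. This is the property a correctness proof of the retrieval would use next.

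The main obstacle is only notational. The paper never formally defines which items belong to $T_v$, and the posting lists hold only leaves at depth $L$. So the step that needs care is the membership characterization ``$s \in T_u \iff \pi(u) \sqsubseteq s$.'' I would state it as a separate claim and prove it by induction over insertions, noting that nodes are never relabeled or merged. Everything else is direct.
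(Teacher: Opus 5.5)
Your proof is correct and follows essentially the same route as the paper's. The paper also argues that $v$ spells $q[1..d]$ and that any item outside $T_v$ lacks this prefix, which gives the strict bound $\text{LCP}(q,s) < d$. Your loop invariant, together with the characterization that $s \in T_u$ exactly when $\pi(u)$ is a prefix of $s$, just makes explicit the ``by construction'' step the paper leaves implicit.
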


\begin{proof}
By construction, $v$ corresponds to prefix $q[1..d]$. Any item $s \notin T_v$ does not have this prefix, so $\text{LCP}(q, s) < d$.
\end{proof}

\begin{theorem}[Query Correctness]
Algorithm~\ref{alg:query} returns the correct top-$k$ items.
\end{theorem}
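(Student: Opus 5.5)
The plan is to argue directly from the Descent Correctness lemma just proved, splitting the argument into a claim about \emph{which LCP values} the output attains and a claim about \emph{how many} items are returned. Let $v$ be the node reached after the descent loop of Algorithm~\ref{alg:query} and $d$ its depth. Then $v$ spells exactly the prefix $q[1..d]$. First I will show that every item $s\in T_v$ has $\text{LCP}(q,s)=d$ exactly. The inequality $\text{LCP}(q,s)\ge d$ holds because $s$ passes through $v$. For the reverse inequality there are two cases. If $d=L$, it is immediate. If $d<L$, the loop broke because $q[d+1]\notin\text{children}(v)$, so no $s\in T_v$ can agree with $q$ at position $d+1$. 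Second, the Descent Correctness lemma gives $\text{LCP}(q,s)<d$ for every $s\notin T_v$, as shown in its proof. Together these say that $T_v$ is precisely the set of items attaining the maximum LCP value $d$, and every other item is strictly worse.

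Next I analyze Algorithm~\ref{alg:collect}. It starts at $v$, and the BFS only pushes children of popped nodes, so by induction every returned index lies in $T_v$. It also returns each index at most once, because each trie node is enqueued exactly once and posting lists are disjoint. By the previous paragraph, any set of $\min(k,|T_v|)$ distinct items of $T_v$ consists of items of maximal similarity. Hence when $|T_v|\ge k$, the output is a set of $k$ items whose LCP values dominate those of all non-returned items, which is the defining property of a top-$k$ answer. Finally, I will check that the while-loop does not stop early while unreturned items of $T_v$ remain and $|\text{result}|<k$. The queue becomes empty only after every node of $T_v$ has been popped and its posting list appended.

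The step I expect to be the main obstacle is reconciling this with the exact wording of the Top-$k$ definition, in two respects. The first is tie-breaking by index. Since all of $T_v$ is tied at value $d$, the definition asks for the $k$ smallest indices in $T_v$. BFS in sorted-symbol order instead returns items in trie (lexicographic) order, which need not coincide. I will handle this by proving correctness in the sense of the LCP multiset: the returned values equal the $k$ largest values. I will state explicitly that index tie-breaking holds only if indices are consistent with lexicographic order, for example when $S$ is sorted before insertion. The second is the case $|T_v|<k$. Here the algorithm returns all of $T_v$, which is exactly the set of items with the maximum value, but it returns fewer than $k$ items. I will record this as the boundary case: the returned items are still correct as the top $|T_v|$, and a full top-$k$ in this case requires continuing to ancestors of $v$. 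I will therefore present the theorem's claim as established under the hypothesis $|T_v|\ge k$, which holds in particular whenever $k\le$ the multiplicity of the best-matching prefix. I will note this restriction rather than silently altering the algorithm.
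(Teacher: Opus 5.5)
Your argument is correct and has the same skeleton as the paper's proof: the Descent Correctness lemma plus a split on whether $|T_v|\ge k$. Where you differ is in what you are willing to conclude, and there your caveats are justified. They point to actual errors in the paper's proof, so the theorem as literally stated is false, and your restricted version is what can be proved.

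The paper's Case~2 asserts that when $|T_v|<k$ the items outside $T_v$ ``would not be in top-$k$''. This contradicts the definition, which requires $k$ items. For example, take $S=(\texttt{aa},\texttt{bb})$, $q=\texttt{aa}$, $k=2$. The descent ends at the leaf for \texttt{aa}, and the algorithm returns only index $1$, while the top-$2$ answer is $\{1,2\}$.

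The paper's Case~1 appeals to the ``canonical order'' of the BFS, but that order is lexicographic rather than by index. Take $s_1=\texttt{ab}$, $s_2=\texttt{aa}$, $q=\texttt{ac}$, $k=1$. Both items tie at LCP $1$, the definition selects $s_1$, and the algorithm returns $s_2$.

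Case~1 also records only $\text{LCP}(q,s)\ge d$ on $T_v$. That alone would not justify returning an arbitrary $k$ items of $T_v$. Your observation that the break condition forces $\text{LCP}(q,s)=d$ exactly is what makes the multiset claim valid, and the paper leaves it implicit.

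So your formulation is the right statement: the returned LCP multiset is correct whenever $|T_v|\ge k$, and index tie-breaking holds only when indices follow lexicographic order. Obtaining the theorem as written would require changing the algorithm, not finding a better proof. For instance, one would continue from the ancestors of $v$ when $|T_v|<k$, and select the smallest indices within each tied class.
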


\begin{proof}
Let $d$ be the descent depth and $v$ the reached node.

\textbf{Case 1:} $|T_v| \geq k$. All items in $T_v$ have $\text{LCP}(q, s) \geq d$. By Lemma above, all items outside $T_v$ have $\text{LCP}(q, s) < d$. The BFS collects items from $T_v$ in canonical order, returning the correct top-$k$.

\textbf{Case 2:} $|T_v| < k$. All items in $T_v$ are returned. By Lemma, these are the items with maximum LCP (value $\geq d$). The remaining items have $\text{LCP} < d$ and would not be in top-$k$.
\end{proof}

\section{Hardware Measurement Methodology}

\subsection{Energy Measurement}

Energy is computed via NVML power sampling:
\[
E = \int_0^T P(t) \, dt \approx \sum_{i=0}^{n-1} P(t_i) \cdot \Delta t
\]

where $P(t_i)$ is instantaneous power sampled via \texttt{nvidia-smi} at $\Delta t = 100$ ms intervals.

\begin{table}[h]
\centering
\caption{NVML measurement parameters.}
\begin{tabular}{@{}lr@{}}
\toprule
Parameter & Value \\ \midrule
Sample interval & 100 ms \\
Power accuracy & $\pm 5$W \\
Temperature accuracy & $\pm 1^\circ$C \\
GPU utilization source & SM activity counter \\
\bottomrule
\end{tabular}
\end{table}

\subsection{Landauer Bound Computation}

The Landauer minimum energy per bit erasure:
\[
E_{\text{Landauer}} = k_B T \ln 2
\]

At $T = 320$ K (typical GPU operating temperature):
\[
E_{\text{Landauer}} = 1.38 \times 10^{-23} \cdot 320 \cdot 0.693 = 3.06 \times 10^{-21} \text{ J/bit}
\]

For a query processing $B$ bits:
\[
E_{\min} = B \cdot 3.06 \times 10^{-21} \text{ J}
\]

\subsection{GPU Utilization Measurement}

GPU utilization is measured as SM (Streaming Multiprocessor) occupancy:
\[
\text{Utilization} = \frac{\text{Active SMs} \cdot \text{Active time}}{\text{Total SMs} \cdot \text{Total time}} \times 100\%
\]

H100 PCIe has 114 SMs. Our measurements show 98.98\% average utilization, indicating 112.8 SMs active on average.

\section{Benchmark Configurations}

\subsection{Sustained Load Benchmark}

\begin{verbatim}
mode: direct_gpu_dpx_lcp_index_top1
n_candidates: 2,000,000
max_len: 256
prefix_len: 128
run_seconds_target: 1200
warmup_s: 5
\end{verbatim}

\subsection{TAL Benchmark}

\begin{verbatim}
n_items: 20,000,000
bucket_count: 256
range_fraction: 1/256
queries: 10,000
\end{verbatim}

\subsection{GNC Benchmark}

\begin{verbatim}
simulation_steps: 1000
sensors: 1000
update_rate: max (measured)
\end{verbatim}

\section{Pseudocode Details}

\subsection{TAL Range Scan}

\begin{algorithm}[H]
\caption{TALRangeScan($\text{sorted\_data}, q, B$)}
\begin{algorithmic}[1]
\Require Sorted dataset, query $q$, bucket count $B$
\State $d \gets \lceil \log_\sigma B \rceil$ \Comment{Prefix depth for $B$ buckets}
\State $p \gets q[1..d]$ \Comment{Query prefix}
\State $(\text{lo}, \text{hi}) \gets$ \Call{BinarySearch}{sorted\_data, $p$}
\For{$i = \text{lo}$ to $\text{hi}$}
    \State Process sorted\_data[$i$]
\EndFor
\end{algorithmic}
\end{algorithm}

\subsection{Subtree Collection (Optimized)}

\begin{algorithm}[H]
\caption{CollectTopKOptimized($v, k$)}
\begin{algorithmic}[1]
\If{$|T_v| \leq k$}
    \State \Return all items in $T_v$ \Comment{Fast path}
\EndIf
\State \Comment{Slow path: BFS with early termination}
\State Execute Algorithm~\ref{alg:collect}
\end{algorithmic}
\end{algorithm}

\section{Availability and Reproducibility}

\subsection{Code and Data Availability}

All source code, benchmark scripts, and raw experimental logs are provided in the supplementary materials accompanying this submission.

\begin{itemize}
    \item \texttt{repro/oom\_wall\_repro.py}: Script to reproduce OOM boundary measurements
    \item \texttt{repro/ultrametric\_memory\_demo.py}: LCP-Index demonstration
    \item \texttt{ancillary/raw\_proof.log}: Raw H100 execution logs
    \item \texttt{ancillary/gpu\_utilization.log}: GPU metrics during benchmarks
    \item \texttt{ancillary/full\_comparison.log}: Baseline comparison data
\end{itemize}

\subsection{Hardware Requirements}

Experiments were conducted on:
\begin{itemize}
    \item GPU: NVIDIA H100 PCIe (80 GB HBM3)
    \item Driver: NVIDIA 535.154.05
    \item CUDA: 12.2
    \item Python: 3.10
    \item PyTorch: 2.1.0
\end{itemize}

The OOM wall characterization requires an 80 GB GPU. Energy measurements require NVML access (nvidia-smi).

\subsection{Licensing}

The LCP-Index algorithm is described fully in this paper. Implementation details that would enable reconstruction of a proprietary high-performance implementation (specifically, the DPX-accelerated CUDA kernels) are intentionally withheld pending patent filing. The theoretical contributions (lower bound, complexity analysis, TAL framework) are fully disclosed.

\section{Ethical Considerations}

This work presents fundamental data structure research with applications to safety-critical systems. We identify no negative societal impacts. The determinism guarantees of LCP-Index may \emph{improve} safety in autonomous systems by eliminating non-reproducible failure modes.

\section{Limitations}

We acknowledge the following limitations:

\begin{enumerate}
    \item \textbf{Semantic similarity:} LCP-Index operates on symbolic sequences and does not capture semantic similarity between syntactically different items. It is not a replacement for embedding-based retrieval.
    
    \item \textbf{Static datasets:} Our analysis focuses on static datasets. Dynamic insertions and deletions require additional bookkeeping not covered here.
    
    \item \textbf{Single-GPU evaluation:} All experiments use a single H100 GPU. Multi-GPU and distributed performance is left for future work.
    
    \item \textbf{Synthetic workloads:} While we use realistic parameters, the benchmarks use synthetic data. Evaluation on production datasets would strengthen the results.
\end{enumerate}

\end{document}